\documentclass[12pt,reqno]{amsart}

\usepackage{graphicx}
\usepackage{amsthm,amsmath,amsfonts,amssymb,amsxtra,appendix,bookmark,dsfont,bm,color, braket}
\usepackage{cite}
\usepackage{mathtools}

\usepackage[truedimen,top=3truecm,bottom=3truecm,left=2.5truecm,right=2.5truecm]{geometry}

\newtheorem{theorem}{Theorem}[section]
\newtheorem{lemma}[theorem]{Lemma}

\theoremstyle{definition}

\newtheorem{remark}[theorem]{Remark}

\newtheorem{corollary}[theorem]{Corollary}

\renewcommand{\epsilon}{\varepsilon}
\newcommand{\E}{\mathbb{E}}

\renewcommand{\phi}{\varphi}
\newcommand{\R}{\mathbb{R}}

	\numberwithin{equation}{section}

\begin{document}

		\title[Asymptotics of Ground States in a Two-Cluster Region]{Asymptotics of Ground States for Helium-Like and Bosonic Atoms in a Two-Cluster Region}
		\author[Y. Goto]{Yukimi Goto}
		\address{Graduate School of Mathematical Sciences, University of Tokyo, Komaba, Meguro-ku, Tokyo 153-8914,
			Japan}
			\email{yukimi@ms.u-tokyo.ac.jp}

		\begin{abstract}
		We study the asymptotic behavior of positive ground states of $N$‑particle atomic Coulomb Hamiltonians in a two‑cluster region.
		For the $N$-particle ground state $\psi$, its one-particle density $\rho$, and $(N-1)$-particle ground state $\phi$, we show that for $0<\alpha<1$, as $|x_N|\to\infty$, $\psi(x_1,\dots,x_N)/\sqrt{\rho(x_N)}= \phi(x_1,\dots,x_{N-1})+O(|x_N|^{-2+\alpha} \ln |x_N|)$ uniformly in the region $|x_i|\le |x_N|^{\alpha}$.
		In the region $|x_i|\le R_0\ln |x_N|$, the convergence rate is $O(|x_N|^{-2} (\ln |x_N|)^2)$.
		We also establish an $O(|x_N|^{-2})$ bound on the squared $L^2$ error.
		These results hold both below the essential spectrum and at threshold.
		\end{abstract}
		\maketitle

		\section{Introduction}

		We consider the $N$-particle atomic Hamiltonian acting on $L^2(\R^{3N})$ defined by
		\[
		H_N=-\sum_{i=1}^N\frac{1}{2}\Delta_i+V_N(X_N),\quad
		V_N(X_N):=-\sum_{i=1}^N\frac{Z}{|x_i|}+\sum_{1\le i<j\le N}|x_i-x_j|^{-1}
		\]
		where $Z>0$ is the nuclear charge, and $\Delta_i$ denotes the three-dimensional Laplacian with respect to $x_i \in \R^3$ ,and $X_N=(x_1,\dots,x_N)$.
		Let $\psi \in L^2(\mathbb{R}^{3N})$ be the positive ground state satisfying $H_N\psi=E_N\psi$, where $E_N$ is the ground state energy of $H_N$ with $E_N \le E_{N-1}$:
		$$E_N=\inf\mathrm{spec} (H_N)=\inf\{\langle f,H_Nf\rangle \colon f\in H^2(\R^{3N}),\; \langle f,f \rangle =1\}.$$
		We use the convention $H_0=0$ and $E_0=0$.
		In addition, $\rho$ denotes the one-particle (marginal) density of $\psi$ defined by
		$$\rho(x_N):=\int_{\R^{3N-3}}|\psi(x_1,\dots,x_N)|^2\,dX_{N-1},\quad dX_{N-1}:=dx_1\cdots dx_{N-1}.$$

		We also consider the positive ground state $\phi$ corresponding to $E_{N-1}$, and define
		\begin{equation}
			\label{def.u}
			\begin{split}
		u(x_N)&:=\int_{\R^{3N-3}} \phi(x_1,\dots,x_{N-1})\psi (x_1,\dots, x_N)\,\,dX_{N-1}.
		\end{split}
		\end{equation}
				Throughout this paper, we assume that ground states are normalized: $\| \psi\|_{L^2(\R^{3N})}=1=\| \phi\|_{L^2(\R^{3N-3})}$.
				It should be noted that we do not impose the Pauli principle (fermionic anti-symmetry).
				For Coulomb systems, such unrestricted ground states are always bosonic (see, e.g.,~\cite[Cor.~3.1]{LiSe}).
				In the case of Helium-like systems $N=2$, the physical ground state is actually spatially symmetric.

		The present paper studies the relation between the $N$- and $(N-1)$-particle ground states in a two-cluster region in which one particle is far from the others.
		In the physics literature, one often assumes a separation of $N$-electron states $\psi_N$ such as
		\[
		\psi_N(x_1,\dots,x_N) \approx f(x_N)\phi_{N-1}(x_1,\dots,x_{N-1}) \quad (|x_i| \ll |x_N|\to\infty),
		\]
		where $f$ is some function, and $\phi_{N-1}$ denotes a physical ground state obeying the Pauli principle.
		Such an asymptotic factorization was discussed, e.g., in the book by Bethe and Salpeter~\cite{BeSa} for helium-like and hydrogen-like atoms, and heuristic arguments for general $N$-electron systems can be found in~\cite{KaDa, HH} and references therein.

		Mathematically, Ahlrichs et al. (AHOM) conjectured in~\cite[Eq.~(5.1)]{AHOM} that for the helium system
		\begin{equation}
			\label{conj.AHOM}
		\lim_{|x_2|\to\infty}\frac{\psi(x_1,x_2)}{\sqrt{\rho(x_2)}}= \phi(x_1)
		\end{equation}
		for every fixed $x_1$.
		In related work, Lieb and Simon~\cite{LS} proved that as $|x_2|\to \infty$
		\begin{equation}
			\label{eq.LS}
		\left|
		\frac{\psi(x_1',x_2)}{\psi(x_1,x_2)}-\frac{\phi(x_1')}{\phi(x_1)}
		\right|
		=o(1)
		\end{equation}
		uniformly for $x_1,x_1'$ in compact sets.
		We note that Lieb--Simon's result implies that there exists a function $\chi$ such that as $|x_2|\to \infty$ $$\frac{\psi(x_1,x_2)}{\chi(x_2)}- \phi(x_1)=o(1),$$ locally uniformly in $x_1$.
		In addition, Combes, M.~Hoffmann-Ostenhof, and T.~Hoffmann-Ostenhof showed~\cite{CHO} that \eqref{conj.AHOM} holds in $L^2$: as $|x_2|\to \infty$
		\begin{equation}
			\label{eq.CHO}
			\int_{\R^3}\left|\frac{\psi(x_1,x_2)}{\sqrt{\rho(x_2)}}-\phi(x_1)\right|^2\,dx_1=O\left(\frac{1}{|x_2|^2}\right).
			\end{equation}
			Their result was extended to $N$-particle ground states by Briet~\cite{Briet}.
			Combining \eqref{eq.LS} with~\eqref{eq.CHO}, we obtain the original AHOM conjecture~\eqref{conj.AHOM}.

			This paper is concerned with a quantitative refinement of the AHOM conjecture.
			Lieb and Simon already pointed out in~\cite{LS} that their convergence rate $o(1)$ could be improved by $o(|x_2|^{-1})$ in our notation.
			Formal analyses such as those in~\cite{BeSa,KaDa} suggest an error of order $|x_N|^{-2}$ when $X_{N-1}$ remains in a compact set.

		For $r_N:=|x_N|$, we introduce a region $A_{N,\alpha}=A_{N,\alpha}(r_N)$ by
		\[
		\begin{cases}
			A_{N,\alpha}(r_N) := \{X_{N-1}=(x_1,\dots,x_{N-1}) \colon |x_i| \le r_N^\alpha, \; \forall 1\le i\le N-1\}& (0<\alpha<1); \\
			 A_{N,0}(r_N):=\{X_{N-1}=(x_1,\dots,x_{N-1}) \colon |x_i| \le R_0\ln r_N, \; \forall 1\le i\le N-1\},
		\end{cases}
		\]
		where $R_0>0$ is a fixed constant.

		Our main result is the following.
		\begin{theorem}
			\label{thm.main}
			Suppose that $E_N\le E_{N-1}<E_{N-2}$ and that $E_N$ is an eigenvalue of $H_N$. Let $\psi$ and $\phi$ be the unique positive ground states of $H_N$ and $H_{N-1}$, respectively.
			Fix $0\le \alpha<1$, and define
			 \begin{equation}
			 \label{eq.fdef}
			f_\alpha(r):=\begin{cases}
				\ln r & (0<\alpha<1),\\
				(\ln r)^2 & (\alpha=0).
			\end{cases}
			\end{equation}
			As $|x_N|\to \infty$, we have
			\[
			\left|\frac{\psi(x_1,\dots,x_N)}{u(x_N)}-\phi(x_1,\dots,x_{N-1})\right| =O(r_N^{-2+\alpha}f_\alpha(r_N))
			\]
			uniformly on $A_{N,\alpha}(r_N)$.
			\end{theorem}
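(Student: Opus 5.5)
We decompose $\psi$ along $\phi$: write
\[
\psi(X_{N-1},x_N)=u(x_N)\phi(X_{N-1})+w(X_{N-1},x_N),\qquad \int w(\cdot,x_N)\phi\,dX_{N-1}=0 .
\]
With this splitting the statement reduces to the pointwise bound $|w|/u=O(r_N^{-2+\alpha}f_\alpha(r_N))$ on $A_{N,\alpha}(r_N)$. Applying $H_N-E_N$ and using $H_{N-1}\phi=E_{N-1}\phi$, we get
\[
(H_{N-1}-E_{N-1})w+\bigl(-\tfrac12\Delta_N+E_{N-1}-E_N\bigr)w+I\,w=-\Bigl[\bigl(-\tfrac12\Delta_N+E_{N-1}-E_N\bigr)u\Bigr]\phi-I\,u\,\phi .
\]
Here $I=-(Z-N+1)/|x_N|$ plus the multipole remainder $\sum_{i<N}(|x_i-x_N|^{-1}-|x_N|^{-1})$. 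Projecting onto $\phi$ gives an ODE-type equation for $u$:
\[
\bigl(-\tfrac12\Delta_N+E_{N-1}-E_N-\tfrac{Z-N+1}{|x_N|}\bigr)u=-\langle\phi,I_{\rm rem}\psi\rangle .
\]
The dipole term $\sum_i x_i\cdot x_N/|x_N|^3$ averages to zero against $\phi^2$ only up to order $r_N^{-2}$, once the fact that $w$ is small is used. From this projected equation we will extract the standard bounds $u>0$, $|\nabla u|/u=O(1)$ and $|\Delta u|/u=O(1)$. We also need the Agmon-type comparison $u(x_N)\gtrsim \rho(x_N)^{1/2}$, which follows from the $L^2$ result \eqref{eq.CHO} (Briet's $N$-body version).

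\textbf{Step 1 ($L^2$ bound).} Taking the inner product of the $w$-equation with $w$ and using the spectral gap $E_{N-2}-E_{N-1}>0$ of $H_{N-1}-E_{N-1}$ on $\phi^\perp$, we obtain
\[
\int|w(X_{N-1},x_N)|^2\,dX_{N-1}\lesssim r_N^{-4}u(x_N)^2 .
\]
The source is dominated by the dipole term, which is $O(r_N^{-2})u\,|X_{N-1}|\phi$. The terms involving $\Delta_N w$ and $\nabla_N u$ are handled in a weighted sense, by averaging over a unit ball in $x_N$ and using interior estimates. This is the squared $L^2$ bound $O(r_N^{-2})$ mentioned in the abstract, after dividing by $\rho$.

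\textbf{Step 2 ($L^2$ to $L^\infty$).} We pass from $L^2$ to pointwise bounds for $w/u$ by local elliptic regularity in the full variable $(X_{N-1},x_N)$ on a ball of radius $1$ around a given point. For $|x_i|\le r_N^\alpha$ we have $|x_i-x_N|\gtrsim r_N$, so no singularity involving $x_N$ is present. The Coulomb singularities among $x_1,\dots,x_{N-1}$ are harmless for Kato-type $L^\infty$ estimates, since $V\in L^p_{\rm loc}$ with $p>3/2$ in each pair variable. The difficulty is that the sup-estimate for $w$ is controlled by a local $L^2$ norm of $w$ plus a local source norm, and both must be compared with $u(x_N)$ rather than with the decaying weight $\phi(X_{N-1})$. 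To obtain the bound relative to $\phi$ as well (which is natural, since the error is measured against $\phi=O(1)$ only), we will use a Harnack inequality for $u$ in $x_N$ on unit scales: $u(y)\asymp u(x_N)$ for $|y-x_N|\le1$.

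\textbf{Step 3 (the rate).} The multipole remainder satisfies $|I_{\rm rem}|\lesssim |X_{N-1}|/r_N^2$ on the region. On $A_{N,\alpha}$ this is $O(r_N^{-2+\alpha})$, and this is the origin of the exponent $-2+\alpha$. On $A_{N,0}$ it gives $O(r_N^{-2}\ln r_N)$. The extra logarithm $f_\alpha$ comes from $\nabla_N$ acting on $u$, where $|\nabla u|/u$ produces $\ln r_N$-type corrections from the Coulomb phase factor $r^{(Z-N+1)/\kappa}$, or from iterating the Harnack/ODE comparison. \textbf{Main obstacle:} making Step 2 uniform on the growing region $A_{N,\alpha}$. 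I would first try a maximum-principle comparison for $w/u$, using a supersolution of the form $C r_N^{-2}(1+|X_{N-1}|)\Phi$ with $\Phi$ slightly slower-decaying than $\phi$. On the boundary of a region like $A_{N,\alpha}$ this requires $\phi$ lower bounds, i.e.\ the exponential decay of $\phi$ at rate set by $E_{N-2}-E_{N-1}$. This lower bound is why the threshold case $E_N=E_{N-1}$ needs separate care, and why logarithmic losses such as $f_\alpha$ are tolerated.
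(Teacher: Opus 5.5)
Your route is genuinely different from the paper's Feynman--Kac argument: you split $\psi=u\phi+w$ with $w\perp\phi$, prove a fiberwise $L^2$ bound using the gap, and then pass to $L^\infty$ by a local estimate on unit balls in $\R^{3N}$. Step 2 is in fact the easy part. On $B_2(X_N)$ the only attractive singular terms are $-Z/|y_i|$, $i<N$, which are uniformly Kato class, and there $|y_i-y_N|\ge r_N/2$. Write $m(x_N):=\|w(\cdot,x_N)\|_{L^2(\R^{3N-3})}$ and $S=\phi\,[\langle\phi,I_{\mathrm{rem}}\psi\rangle-I_{\mathrm{rem}}u]$. The local sup estimate for $(H_N-E_N)w=S$ then gives $|w(X_N)|\le C\bigl(\sup_{|y-x_N|<2}m(y)+\sup_{B_2(X_N)}|S|\bigr)$ with uniform constants. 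Since $|\langle\phi,I_{\mathrm{rem}}\psi\rangle|\le\|I_{\mathrm{rem}}\phi\|_2\sqrt\rho\le Cr_N^{-2}u$ and $|I_{\mathrm{rem}}|\phi\le Cr_N^{-2}\sum_i|y_i|\phi\le Cr_N^{-2}$, you even get $|S|\le Cr_N^{-2}u(x_N)$ whenever all $|x_i|\le r_N/4$.

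Consequences for your write-up:
\begin{itemize}
\item Uniformity on $A_{N,\alpha}$ is not an obstacle.
\item No supersolution built from lower bounds on $\phi$ is needed. The decay of $\phi$ does not depend on whether $E_N=E_{N-1}$.
\item $f_\alpha$ has nothing to do with $\nabla u/u$; in the paper it comes from the time $t=R\ln r_N$.
\end{itemize}
Everything therefore rests on Step 1, and that is where the proposal has a genuine gap.

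Step 1 must give $m\le Cr_N^{-2}u$. This is not the squared $L^2$ bound of Theorem~\ref{th.L2}, which only yields $m^2\le Cr_N^{-2}\rho$, i.e.\ $m\le Cr_N^{-1}u$. Fed into Step 2, that gives an $O(r_N^{-1})$ error, which is not $O(r_N^{-2+\alpha}f_\alpha)$ for $\alpha<1$; your ``after dividing by $\rho$'' is off by a factor $r_N^{2}$.

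Nor can the bound come from averaging over unit balls. Pairing the $w$-equation with $w$ gives $-\tfrac14\Delta m^2+\tfrac12\|\nabla_Nw\|^2+(g+\epsilon_N-Z/r_N)m^2\le Cr_N^{-2}u\,m$. Here $g=E^{(1)}_{N-1}-E_{N-1}$, not $E_{N-2}-E_{N-1}$, since $H_{N-1}$ may have excited bound states. Testing this against a unit-scale cutoff only bounds $\int_{B_1}m^2$ by $\int_{B_2}m^2$ with an $O(1)$ constant, so no decay is gained.

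The missing idea is a global comparison, as in the paper's proof of Theorem~\ref{th.L2}:
\begin{itemize}
\item Use $|\nabla m|\le\|\nabla_Nw\|$ and regularize by $\sqrt{m^2+\delta^2}$ to get $(-\tfrac12\Delta+g+\epsilon_N-Z/r_N)m\le Cr_N^{-2}u$.
\item Lemma~\ref{lemA.ubound}, $\sqrt\rho\le Cu$ and $|\nabla u|\le Cu$ show that $u/r_N^2$ is a strict supersolution of the same operator for large $r_N$.
\item Since $m\le\sqrt\rho\to0$, the maximum principle gives $m\le Au/r_N^2$.
\end{itemize}
Also, $\sqrt\rho\le Cu$ cannot be taken from CHO or Briet, because their results require $E_N<E_{N-1}$. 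At threshold you must prove it as in Appendix~\ref{App}.

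With these repairs your approach works. It even yields $O(r_N^{-2})$ uniformly on $\{|x_i|\le r_N/4\}$, better than the stated rate. As written, however, its decisive estimate is asserted rather than proved.
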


			\begin{remark}
				One logarithmic factor arises from $t=R\ln |x_N|$ in Lemma~\ref{lem.first} below.
				Because this is a technical choice, the logarithm loss might be removable.
			\end{remark}

			\begin{remark}
				According to the celebrated HVZ theorem~\cite{Teschl}, $E_N$ is a discrete eigenvalue when $E_N<E_{N-1}$.
				It is generally believed that $E_N<E_{N-1}$ guarantees $E_{N-1}<E_{N-2}$, but this is still open.
				Nevertheless, Zhislin's theorem~\cite{Zhislin,Teschl,LiSe} implies that $E_N<E_{N-1}<E_{N-2}$ whenever $N<Z+1$.
				Moreover, $E_{N-1}$ is an isolated eigenvalue of $H_{N-1}$ under our assumptions, which is used below.
				\end{remark}

			\begin{remark}
				\label{rem.threshold}
			    For certain critical charges $Z_c<N-1$, there exists a ground state at threshold $E_N=E_{N-1}$~\cite{HOS, BFLS,Gridnev}.
			    By contrast, there is no possibility for the existence of an unrestricted threshold ground state when $Z_c=N-1$~\cite{Goto}.
				\end{remark}

							\begin{theorem}
					\label{th.L2}
					Suppose that $E_N\le E_{N-1}<E_{N-2}$ and that $E_N$ is an eigenvalue of $H_N$.
					Then there exist constants $C,L_0>0$ such that for all $|x_N|\ge L_0$
					\[
					\int_{\R^{3N-3}}\left| \frac{\psi(x_1,\dots,x_N)}{\sqrt{\rho(x_N)}}-\phi(x_1,\dots,x_{N-1})\right|^2\,dx_1\cdots dx_{N-1}\le C|x_N|^{-2}.
					\]
				\end{theorem}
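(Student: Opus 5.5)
Fix $x_N$ and split $\psi(\cdot,x_N)$ into its component along $\phi$ and an orthogonal remainder:
\[
\psi(X_{N-1},x_N)=u(x_N)\phi(X_{N-1})+w(X_{N-1},x_N),\qquad \int_{\R^{3N-3}}\phi\, w(\cdot,x_N)\,dX_{N-1}=0 ,
\]
with $u$ as in \eqref{def.u}. Then $\rho=u^2+\|w(\cdot,x_N)\|^2$, where the norm is in $L^2(\R^{3N-3})$. The target integral is
\[
\int\Bigl|\frac{\psi}{\sqrt\rho}-\phi\Bigr|^2 dX_{N-1}
=2-\frac{2u}{\sqrt\rho}
=\frac{2\|w\|^2}{\sqrt\rho\,(\sqrt\rho+u)}\le \frac{2\|w\|^2}{\rho}.
\]
This uses $u>0$, which holds because both ground states are positive. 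So it suffices to prove the pointwise-in-$x_N$ bound
\[
\|w(\cdot,x_N)\|^2\le C|x_N|^{-2}\rho(x_N)\quad\text{for } |x_N|\ge L_0 .
\]

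To get this, write $H_N=H_{N-1}\otimes 1-\tfrac12\Delta_N+I$ with $I=-Z/|x_N|+\sum_{i<N}|x_i-x_N|^{-1}$. Multipole expansion gives $I=\frac{N-1-Z}{|x_N|}+W$, where $W$ is small when $|x_i|\ll|x_N|$. Let $P=|\phi\rangle\langle\phi|$ act in the first $N-1$ variables and set $P^\perp=1-P$. Since $E_{N-1}<E_{N-2}$, $E_{N-1}$ is isolated, so $H_{N-1}-E_{N-1}\ge \delta>0$ on $\operatorname{ran}P^\perp$ (plus the kinetic energy of $x_N$). I would take the inner product of $(H_N-E_N)\psi=0$ with $\chi^2 w$, where $\chi=\chi(x_N)$ is a cutoff localizing to a shell around a large $|x_N|$. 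Using $E_N\le E_{N-1}$ and IMS localization, this gives
\[
\delta\|\chi w\|^2\lesssim |\langle \chi w, P^\perp I\,\chi u\phi\rangle| + (\text{terms in }|\nabla\chi|^2,\ |x_N|^{-1}\|\chi w\|^2).
\]

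The key input is that the monopole part cancels under the projection: $P^\perp\bigl(\tfrac{N-1-Z}{|x_N|}\phi\bigr)=0$. Moreover, the dipole term $x_i\cdot x_N/|x_N|^3$ gives $\|P^\perp W\phi\|\lesssim |x_N|^{-2}$, using the exponential decay of $\phi$ to control the region where some $|x_i|\gtrsim|x_N|$. Cauchy--Schwarz then gives $\|\chi w\|^2\lesssim |x_N|^{-4}\|\chi u\|^2+\dots$, which is even better than required. The weaker rate $|x_N|^{-2}$ is what remains once the error terms from localization are included.

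The main obstacle is passing from shell-averaged $L^2$ estimates to a pointwise statement in $x_N$ at the threshold case $E_N=E_{N-1}$. There $\rho$ decays only sub-exponentially or polynomially, and there is no spectral gap in the $x_N$ direction. My first attempt would be to work with the fiber operator at fixed $x_N$: treat $x_N$ as a parameter, use $(H_{N-1}-E_{N-1})\ge\delta$ on $P^\perp$, and control the term $-\tfrac12\Delta_N$ acting on $w$ through a Harnack/subsolution argument for $\|w(\cdot,x_N)\|^2$. Concretely, I would compute $\Delta_N\|w\|^2\ge 2\delta\|w\|^2 - C|x_N|^{-2}u\|w\|$ and apply a maximum-principle comparison with $C|x_N|^{-2}\rho$. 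This needs $\rho$ to vary slowly on unit scales; I would get that from the Harnack inequality for $\psi$ together with the Lieb--Simon type comparison. Following the strategy of Combes--Hoffmann-Ostenhof and Briet, this would give the bound on $\|w\|^2/\rho$, and hence the theorem.
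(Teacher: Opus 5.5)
There is a genuine gap. Your reduction to $\|w(\cdot,x_N)\|^2\le Cr_N^{-2}\rho(x_N)$ is correct; up to constants it is equivalent to the paper's $\sqrt\rho-u\le A_0u/r_N^2$. The ingredients you name are also the ones used in Lemmas~\ref{lem.Schroineq} and~\ref{lemA.ubound}: the gap of $H_{N-1}$ on $\{\phi\}^\perp$, the cancellation of the monopole by $Q$, and $\|F\phi\|_2\le Cr_N^{-2}$.

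\textbf{Where it fails.} The shell-averaged IMS estimate gives nothing pointwise, and its localization errors are never quantified, so everything rests on your last step. That step does not close as set up. Dropping $2\|\nabla_N w\|^2$, the best available inequality is
\[
\Delta_N\|w\|^2\ge 4(\delta+\epsilon_N-Z/r_N)\|w\|^2-Cr_N^{-2}u\|w\|,\qquad \delta=E^{(1)}_{N-1}-E_{N-1},
\]
and your version with $2\delta$ is weaker still. For $G=Ar_N^{-2}\rho$ to be a barrier you need $\Delta_N G\le(4(\delta+\epsilon_N)-c)G$. By Theorem~\ref{thm.Acompa}, such a positive supersolution is bounded below by $c\,e^{-m'r_N}$ for every $m'>2\sqrt{\delta+\epsilon_N}$. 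On the other hand, \eqref{eq.HOM} gives $G\lesssim e^{-2\sqrt{2\epsilon_N}r_N}$ up to powers. So below threshold the barrier can exist only if $\epsilon_N\le\delta$, and the hypotheses do not give this. For helium-like ions $\epsilon_2=Z^2/2-\frac58Z+O(1)$ while $\delta=3Z^2/8$, so the condition fails for large $Z$ (numerically already at $Z=5$). With your constant $2\delta$ you would need $\epsilon_N\le\delta/4$, which fails for helium itself ($\epsilon_2\approx0.90$, $\delta=3/2$).

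Even when the constants are favorable, you cannot verify the barrier property, because it requires an upper bound on $\Delta_N\rho$. Schr\"odinger inequalities such as Lemma~\ref{lem.Schroineq} bound $-\Delta_N\sqrt\rho$ only from above. Harnack's bounded ratios on unit balls say nothing about $\Delta_N\rho/\rho$. The threshold case is not the real obstacle: the mass term comes from the gap of $H_{N-1}$ in both cases, and the barrier problem is worst below threshold.

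\textbf{The missing idea.} Work at the $\sqrt\rho$ scale and build the barrier from $u$, not $\rho$. By Lemma~\ref{lemA.ubound} and $\sqrt\rho\le Cu$ (Appendix~\ref{App}), $u$ solves $(-\frac12\Delta_N-\frac{Z-N+1}{r_N}+\epsilon_N)u=O(u/r_N^2)$ with two-sided control.

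The paper does this with the scalar $\sqrt\rho-u$. The Hoffmann-Ostenhof inequality and $(1-u^2/\rho)\sqrt\rho\ge\sqrt\rho-u$ give $\widehat L(\sqrt\rho-u)\le C_0u/r_N^2$, where $\widehat L$ contains $+\epsilon_N$ plus the mass $\frac12(E^{(1)}_{N-1}-E_{N-1})$. Because the $\epsilon_N$ exactly offsets the decay of $u$, one gets $\widehat L(u/r_N^2)\ge c_0u/r_N^2$ however $\epsilon_N$ compares with the gap. The maximum principle then yields $\sqrt\rho-u\le A_0u/r_N^2$.

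Your framework can be repaired the same way. Keep the gradient term, i.e.\ use Kato's inequality $\Delta_N\|w\|\ge\mathrm{Re}\langle w,\Delta_N w\rangle/\|w\|$. This gives $(-\frac12\Delta_N+\epsilon_N+\delta-Z/r_N)\|w\|\le Cu/r_N^2$. Then compare with $Au/r_N^2$ instead of $r_N^{-2}\rho$. This closes for the same reason and even gives $\|w\|\le Au/r_N^2$, i.e.\ an $O(r_N^{-4})$ bound.
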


			Once these theorems are proved, we obtain the following quantitative form of the AHOM asymptotics.

		\begin{corollary}
			\label{Cor.main}
			Under the same assumptions of Theorem~\ref{thm.main}, as $|x_N|\to \infty$, we have
			\[
			\left|\frac{\psi(x_1,\dots,x_N)}{\sqrt{\rho(x_N)}}-\phi(x_1,\dots,x_{N-1})\right| =O(r_N^{-2+\alpha}f_\alpha(r_N))
			\]
			uniformly on $A_{N,\alpha}(r_N)$.
			\end{corollary}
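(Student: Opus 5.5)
The plan is to deduce the corollary from Theorem~\ref{thm.main} and Theorem~\ref{th.L2} by showing that the two normalizations $u(x_N)$ and $\sqrt{\rho(x_N)}$ agree to relative order $r_N^{-2}$. We write
\[
\frac{\psi}{\sqrt{\rho}}-\phi=\frac{u}{\sqrt{\rho}}\Bigl(\frac{\psi}{u}-\phi\Bigr)+\Bigl(\frac{u}{\sqrt{\rho}}-1\Bigr)\phi .
\]
Since $\phi$ is bounded on $\R^{3N-3}$ (standard for Coulomb eigenfunctions), it suffices to prove two things. First, $u/\sqrt{\rho}$ is bounded for large $r_N$. Second, $|u(x_N)/\sqrt{\rho(x_N)}-1|=O(r_N^{-2})$. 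The first term is then $O(r_N^{-2+\alpha}f_\alpha(r_N))$ uniformly on $A_{N,\alpha}(r_N)$ by Theorem~\ref{thm.main}. The second is $O(r_N^{-2})$ uniformly everywhere, which is dominated by the first because $\alpha\ge 0$ and $f_\alpha\ge 1$ for large $r_N$.

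For the comparison of $u$ and $\sqrt\rho$, fix $x_N$ and set $g(X_{N-1}):=\psi(X_{N-1},x_N)/\sqrt{\rho(x_N)}$. By the definition of $\rho$, $\|g\|_{L^2(\R^{3N-3})}=1$, and by \eqref{def.u}, $\langle \phi,g\rangle=u(x_N)/\sqrt{\rho(x_N)}$. Because $\phi$ and $g$ are both unit vectors,
\[
\|g-\phi\|_{L^2}^2=2-2\langle\phi,g\rangle=2\Bigl(1-\frac{u(x_N)}{\sqrt{\rho(x_N)}}\Bigr).
\]
Theorem~\ref{th.L2} says the left side is at most $C r_N^{-2}$ for $r_N\ge L_0$. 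Cauchy--Schwarz also gives $0<u\le\sqrt{\rho}$, since $\phi$ and $\psi$ are positive. Together these yield
\[
0\le 1-\frac{u(x_N)}{\sqrt{\rho(x_N)}}\le \frac{C}{2}r_N^{-2}.
\]
This gives both required facts at once. The ratio $u/\sqrt\rho$ lies in $[1-Cr_N^{-2}/2,\,1]$, so it is bounded and bounded away from zero. The normalization error is $O(r_N^{-2})$.

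Combining with the decomposition above finishes the proof. The argument is essentially algebraic. The only points needing care are the following:
\begin{itemize}
\item the global boundedness of $\phi$, which is standard elliptic regularity for Coulomb eigenfunctions;
\item the positivity of $u$, so that the quotient in Theorem~\ref{thm.main} is defined;
\item the fact that the error in Theorem~\ref{thm.main} dominates $r_N^{-2}$.
\end{itemize}
No step presents a real obstacle. All the analytic difficulty is contained in Theorems~\ref{thm.main} and~\ref{th.L2}.
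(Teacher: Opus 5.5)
Your proposal is correct and is essentially the paper's proof: Theorem~\ref{th.L2} is combined with the identity $\int|\psi/\sqrt{\rho}-\phi|^2\,dX_{N-1}=2(1-u/\sqrt{\rho})$ to get $u/\sqrt{\rho}=1+O(r_N^{-2})$, and the boundedness of $\phi$ together with Theorem~\ref{thm.main} then gives the claim. Your explicit decomposition, and your check that the $O(r_N^{-2})$ term is dominated by $r_N^{-2+\alpha}f_\alpha(r_N)$, simply spell out the step the paper summarizes as ``the conclusion follows.''
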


		\begin{proof}[Proof of Corollary~\ref{Cor.main}]
		Combining Theorem~\ref{thm.main} with Theorem~\ref{th.L2}, we have
		\[
		2\left(1-
		\frac{u(x_N)}{\sqrt{\rho(x_N)}} \right)=\int_{\R^{3N-3}}\left| \frac{\psi(x_1,\dots,x_N)}{\sqrt{\rho(x_N)}}-\phi(x_1,\dots,x_{N-1})\right|^2\,dX_{N-1}=O(r_N^{-2})
		\]
		as $|x_N|\to \infty$.
		Since $\phi$ is bounded~\cite{Kato,AS}, the conclusion follows.
			\end{proof}

						\begin{remark}
						In the case of $E_N<E_{N-1}<E_{N-2}$, Theorem~\ref{th.L2} is known:~\cite{CHO} for two-electron atoms and~\cite{Briet} for general bosonic atoms.
						Combining this with Theorem~\ref{thm.main} gives Corollary~\ref{Cor.main}.
			The proofs in~\cite{CHO,Briet} rely on a partitioning argument from~\cite{CT} and use the binding condition $E_N<E_{N-1}$, so that their results do not cover the threshold case. In Section~\ref{sect.L2}, we give a different proof that is also applicable at threshold.
			\end{remark}

				We close this introduction with an outline of the proof strategy.

				The proof of Theorem~\ref{thm.main} is given in Section~\ref{sect.main}.
				The argument is based on Brownian motion ideas introduced by Lieb and Simon~\cite{LS}.
				Intuitively, the ground state $\psi$ can be approximated by $e^{-t(H_{N-1}-\Delta_N/2-E_N)}\psi$ for large separation $|x_i|\ll|x_N|\to\infty$ and long time $t\ll |x_N|$.
				In addition, $e^{-t(H_{N-1}-E_{N-1})}\psi$ approaches $u\phi$ as $t\to\infty$.
				Lieb and Simon~\cite{LS} considered general $N$-body systems and did not use the Coulombic nature except the decay rate.
			    In the remark following~\cite[Thm.~4.1]{LS}, the authors pointed out that for atomic systems their argument might be refined by replacing the contribution from the intercluster potentials $V_{ij}(x_i+b_i(t)-x_j-b_j(t))$ by a suitable quantity, instead of zero.
			    Our approach was motivated by this observation, and this replacement is one of the main ingredients of the proof.
				They also obtained the AHOM conjecture in the case of compactly supported potentials~\cite[Thm.~5.5]{LS}, and the error decays exponentially.
				For Coulomb interactions, our estimate is uniform on the expanding regions $A_{N,\alpha}$ and gives an error $o(r_N^{-1})$ for every $0\le\alpha<1$.
				In particular, it implies the improvement of the ratio asymptotics suggested in~\cite{LS}.

				We provide the proof of Theorem~\ref{th.L2} in Section~\ref{sect.L2}.
				There we use differential inequality techniques rather than the partitioning method employed in~\cite{CHO,Briet}.
				In~\cite[Sect.~II]{CHO}, they also obtained $1-u/\sqrt{\rho}=o(1)$ as $|x_N|\to\infty$ without the partitioning method.
				Our proof is a modification of their argument, and does not use Combes--Thomas type spectral analysis~\cite{CT,DHSV}; consequently, it does not require the binding condition $E_N<E_{N-1}$.
				It might be worth mentioning that the argument in Section~\ref{sect.L2} is independent of Section~\ref{sect.main}.

				Finally, our arguments rely heavily on upper and lower bounds of the one particle density~\eqref{eq.HOM} and~\eqref{eq.threshold} below.
				Although these can be obtained by Schr\"odinger inequality method as in~\cite{AHOM}, the lower bounds may not be explicitly written in literature for general $N$-particle cases.
				We include the proof for completeness in Appendix~\ref{App}.

			\section{Proof of Theorem~\ref{thm.main}}
			\label{sect.main}
			Throughout the proof, $C,c>0$ denote positive constants whose values may change from line to line.
			We also use $C_a$ to denote a positive constant depending on parameter $a$.
			These constants are independent of $X_N$, $t$, and $r_N$, but may depend on the fixed parameters $\alpha$, $Z$, $N$, etc.

			Let $U(x_N):=(Z-N+1)/r_N$ and $\widetilde V_N:=V_{N-1}-U(x_N)$.
			First, we approximate $\psi$ by $e^{-t(H_{N-1}-\Delta_N/2-U(x_N)-E_N)}\psi$ for large $|x_N|$.
			To do this, we define
			\[
			\widetilde{\psi}(X_N;t):=
			\E \left(\exp \left[\int_0^t (E_N-\widetilde V_N)(X_N+B(s)) \,ds\right]\psi(X_N+B(t))\right).
			\]
			Here $B(t)=(b_1(t),\dots,b_N(t))$, and $b_1,\dots,b_N$ are independent three-dimensional Brownian motions starting at the origin, where $\E$ denotes the Brownian expectation.
			For an event $A$, let $\mathbf1(A)=\mathbf1_A$ be its indicator and write $\mathbb{P}(A):=\E(\mathbf1_{A})$.
			Introducing $U$ is a new ingredient compared with the argument of Lieb and Simon~\cite{LS}.

			Throughout the present paper, we shall repeatedly use the following estimates.
			We will give the proofs of~\eqref{eq.HOM} and~\eqref{eq.threshold} in Appendix~\ref{App}.
			\begin{enumerate}
				\item Exponential bounds~\cite{AHO,DHSV,Morgan}: if $E_N<E_{N-1}$, then for every
				$0<\sigma<\sqrt{2\epsilon_N}$, there exists
				$C_\sigma>0$ such that
				\begin{equation}
					\label{eq.Agmon}
					\psi(X_N)
					\le C_\sigma e^{-\sigma|X_N|},
					\qquad
					|X_N|
					:=\left(\sum_{i=1}^N|x_i|^2\right)^{1/2},
				\end{equation}
				where $\epsilon_N:=E_{N-1}-E_N$ is the ionization energy.
			See~\cite{Agmon,CS} for the standard Agmon estimates.

			\item The asymptotics of the ground state density when $E_N<E_{N-1}$: for sufficiently large $r=|x|>0$, it holds that
			\begin{equation}
				\label{eq.HOM}
				cr^{(Z-N+1)/\sqrt{2\epsilon_N}-1} e^{-\sqrt{2\epsilon_N} r}\le u(x)\le\sqrt{\rho(x)} \le Cr^{(Z-N+1)/\sqrt{2\epsilon_N}-1} e^{-\sqrt{2\epsilon_N} r}
				\end{equation}
				where $0<c<C$ are some constants.
				The upper bound was given in~~\cite{AHOM}, and the general lower bound will be proved in Appendix~\ref{App}.

				\item For the threshold ground state, the preceding bounds are replaced by
				\begin{equation}
					\label{eq.threshold}
					c r^{-3/4} e^{-\kappa_N\sqrt{r}}\le u(x)\le\sqrt{\rho(x)} \le C r^{-3/4} e^{-\kappa_N\sqrt{r}},
				\end{equation}
				where $\kappa_N:=\sqrt{8\left(N-1-Z\right)}$.
				For the two-electron atom, this was already pointed out in~\cite[Remark~4(2)]{HOS}, but their bound allows an arbitrary $\delta>0$ loss in the polynomial exponents.
				The same comparison argument gives the same form for $Z_c<N-1$.
			\end{enumerate}
			The following lemma is a refinement of the counterpart in~\cite{LS} and provides the leading order of Theorem~\ref{thm.main}.

			\begin{lemma}
				\label{lem.first}
				Fix $0\le\alpha <1$.
				Then, for $R>1$ and $0<t\le R \ln r_N$, we have
				\[
				\sup_{X_{N-1}\in A_{N,\alpha}}{|\psi(X_N)-\widetilde{\psi}(X_N;t)| }\le C_{R}\frac{f_\alpha(r_N)}{r_N^{2-\alpha}}u(x_N),
				\]
				for sufficiently large $|x_N|$, where $f_\alpha$ is given in~\eqref{eq.fdef}.
				\end{lemma}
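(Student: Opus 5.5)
The plan is to compare the Feynman--Kac representations of $\psi$ and $\widetilde\psi$ directly. Since $H_N\psi=E_N\psi$, we have $\psi=e^{-t(H_N-E_N)}\psi$, that is,
\[
\psi(X_N)=\E\Bigl(\exp\Bigl[\int_0^t (E_N-V_N)(X_N+B(s))\,ds\Bigr]\psi(X_N+B(t))\Bigr).
\]
Moreover $V_N=\widetilde V_N+W$ with
\[
W(X_N):=-\frac{Z}{|x_N|}+\sum_{i<N}\frac{1}{|x_i-x_N|}+\frac{Z-N+1}{|x_N|}=\sum_{i=1}^{N-1}\Bigl(\frac{1}{|x_i-x_N|}-\frac{1}{|x_N|}\Bigr).
\]
This is the point of introducing $U$: the monopole part of the intercluster interaction cancels. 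For $|x_i|\le |x_N|/4$ a multipole expansion gives $|W(X_N)|\le C\sum_{i<N}|x_i|/|x_N|^2$. Writing $\mathcal W:=\int_0^tW(X_N+B(s))\,ds$ and using $|e^{-a}-1|\le |a|e^{|a|}$, we get
\[
|\psi-\widetilde\psi|(X_N)\le \E\Bigl(|\mathcal W|\,e^{|\mathcal W|}\,e^{\int_0^t(E_N-\widetilde V_N)(X_N+B(s))ds}\,\psi(X_N+B(t))\Bigr).
\]

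I would split the Brownian paths into three events. Let $G_N$ be the event $\sup_{s\le t}|b_N(s)|\le r_N/4$. Let $G$ be the event on which, in addition, $\sup_{s\le t}|b_i(s)|\le K r_N^\alpha$ for $0<\alpha<1$, or $\le K\ln r_N$ for $\alpha=0$, for all $i<N$.

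On $G$ and for $X_{N-1}\in A_{N,\alpha}$, the dipole bound gives $|\mathcal W|\le C t\, r_N^{\alpha-2}$ for $0<\alpha<1$, and $|\mathcal W|\le Ct\ln r_N/r_N^2$ for $\alpha=0$. With $t\le R\ln r_N$ this is exactly $C_R f_\alpha(r_N) r_N^{\alpha-2}$, and in particular $e^{|\mathcal W|}\le 2$. Moreover, on $G$ we have $0\le \sum_{i<N}|x_i-x_N|^{-1}\le C/r_N$ along the path. Hence $e^{\int(E_N-\widetilde V_N)}\le e^{Ct/r_N}e^{\int (E_N-V_N)}\le 2e^{\int (E_N-V_N)}$, so the $G$-part is bounded by $C_R f_\alpha r_N^{\alpha-2}\,\psi(X_N)$.

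On $G_N^c$, I would use the Gaussian tail $\mathbb P(G_N^c)\le Ce^{-r_N^2/(Ct)}$. I would combine it, via Cauchy--Schwarz, with the Khasminskii bound $\E e^{2\int_0^t |V|}\le Ce^{Ct}$ for Kato-class Coulomb potentials and with $\|\psi\|_\infty<\infty$. This gives at most $e^{-c r_N^2/\ln r_N}$, which is negligible against $u(x_N)r_N^{\alpha-2}$ by \eqref{eq.HOM} or \eqref{eq.threshold}.

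The event $G_N\setminus G$, where some inner particle makes a large excursion, has only polynomially small probability when $\alpha=0$: roughly $r_N^{-K^2/(2R)}$. So it cannot be treated crudely. Here I would not throw away $W$. Instead I would keep the pathwise bound $|\mathcal W|\le C\int_0^t\sum_{i<N}|x_i+b_i(s)|\,ds/r_N^2$, valid on $G_N$ while $|x_i+b_i|\le r_N/4$; the remaining sub-case is handled like $G_N^c$, using the exponential decay of $\psi$ in $X_{N-1}$ from \eqref{eq.Agmon}. The factor $\sup_s|b_i(s)|$ is then controlled by Cauchy--Schwarz against its Gaussian moments. This yields a contribution $\le C_R f_\alpha r_N^{\alpha-2}$ times a weighted Feynman--Kac expectation of $\psi$ that is comparable to the one on $G$.

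The main obstacle is the final step: to convert the bound $C_R f_\alpha r_N^{\alpha-2}\,\psi(X_N)$, or the analogous weighted expectation, into $C_R f_\alpha r_N^{\alpha-2}\,u(x_N)$ uniformly on $A_{N,\alpha}$. That is, I need $\sup_{X_{N-1}}\psi(X_N)\le C u(x_N)$ for large $|x_N|$. I would try to prove it by a bootstrap, since the estimate above gives $\psi\le (1+o(1))\widetilde\psi$. I would then bound $\widetilde\psi=e^{-t(H_{N-1}-E_N)}\otimes e^{t(\Delta_N/2+U)}\psi$ by splitting $e^{-t(H_{N-1}-E_{N-1})}$ into the projection onto $\phi$ plus a remainder that decays like $e^{-t(E_{N-2}-E_{N-1})}$, using the gap $E_{N-1}<E_{N-2}$. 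The $\phi$-part produces the heat flow in $x_N$ of $u$ over time $t\lesssim\ln r_N$, which changes $u$ only by a bounded factor: $U$ is $O(1/r_N)$, and the Gaussian spread $\sqrt t\ll r_N$ shifts the exponential profile in \eqref{eq.HOM} or \eqref{eq.threshold} by at most $e^{Ct}$. The factor $e^{t(E_N-E_{N-1})}\le1$ is harmless, including at threshold. The delicate part is making the remainder term, and the shift by $|b_N|\sim\sqrt{t}$, lose at most a constant rather than a power of $r_N$. If that fails, I would shrink the window to $t=R\ln r_N$ with $R$ tuned so that the resulting polynomial losses are absorbed.
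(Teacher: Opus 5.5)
Your skeleton matches the paper's proof. You use the same Feynman--Kac difference $\E(\psi(X_N+B(t))e^{F}(1-e^{\mathcal W}))$, with the monopole $(N-1)/|x_N|$ cancelled by $U$. You have the same dipole bound $O(r_N^{\alpha-2})$ (resp.\ $O(r_N^{-2}\ln r_N)$) on the good event, and the same Gaussian-tail plus Khasminskii treatment of $\{\sup_{s\le t}|b_N(s)|\ge\theta r_N\}$.

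The genuine gap is the step you flag and leave open: the uniform bound $\sup_{X_{N-1}}\psi(X_{N-1},y)\le C\sqrt{\rho(y)}\le Cu(y)$. Your bootstrap does not supply it. The claimed $\psi\le(1+o(1))\widetilde\psi$ rests on your treatment of $G_N\setminus G$, which (see below) does not give something comparable to $\psi(X_N)$ without the very bound you are proving. The long-time projection argument is left unfinished. The fallback of tuning $R$ is not available, since the lemma is asserted for every $R>1$.

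The paper closes this in one line with the subsolution estimate (Aizenman--Simon, Hinz--Kalf):
$\psi(X_N)\le C\|\psi\|_{L^2(\{|Y_N-X_N|<1\})}\le C\bigl(\int_{|y_N-x_N|<1}\rho(y_N)\,dy_N\bigr)^{1/2}\le C\sqrt{\rho(x_N)}$.
The last step uses the two-sided asymptotics \eqref{eq.HOM}, \eqref{eq.threshold}. These also give $\sqrt\rho\le Cu$ and $\sqrt{\rho(x_N+b)}\le C_\theta e^{c|b|}\sqrt{\rho(x_N)}$ for $|b|\le\theta r_N$. Hence $\psi(X_N+B(t))\le C_\theta u(x_N)e^{c|b_N(t)|}$ on $\{\sup|b_N|\le\theta r_N\}$, wherever the inner particles are.

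Your bootstrap idea can be repaired, but more simply than you planned. Since $W\ge-(N-1)/|x_N|$, on $G_N$ one has $1-e^{\mathcal W}\le Ct/r_N$. Positivity alone then gives $\psi(X_N)(1-Ct/r_N)\le\widetilde\psi(X_N;t)+Ce^{Ct-cr_N^2/t}$. At $t=1$, the $L^2\to L^\infty$ bound for $e^{-H_{N-1}}$ gives $\widetilde\psi(X_N;1)\le C\E_N\bigl(e^{\int_0^1U}\sqrt{\rho(x_N+b_N(1))}\bigr)\le C\sqrt{\rho(x_N)}$. This is just a Feynman--Kac proof of the subsolution estimate; no long times or projection onto $\phi$ are needed.

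Second, your handling of the inner-excursion event $G_N\setminus G$ fails as stated. Cauchy--Schwarz against moments of $\sup_s|b_i(s)|$ leaves the factor $\E\bigl(e^{2\int_0^t(E_N-\widetilde V_N)}\psi(X_N+B(t))^2\bigr)^{1/2}$. This is not comparable to $\psi(X_N)$ and in general costs $e^{Ct}=r_N^{CR}$.

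Contrary to your remark, this event can be treated crudely once the pointwise bound above is in hand, also for $\alpha=0$. Cauchy--Schwarz against the indicator gives $C_\theta u(x_N)\,e^{Ct}\,r_N^{-cL^2/R}$, where $C$ does not depend on $L$. The cutoff $L\ln r_N$ is allowed to depend on $R$, so choosing $L$ large makes this $O(u(x_N)r_N^{-2})$. This is what the paper does, with the cutoff $L\ln r_N$ for every $\alpha$.
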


				\begin{proof}
					By the Feynman--Kac formula, we see
					\begin{equation}
						\label{eq.FKequ}
						\begin{split}
						|\psi(X_N)-\widetilde \psi(X_N;t)|
						&=\left|\E\left(\psi(X_N+B(t)) e^{F-G}[e^G -1]\right)\right|,
						\end{split}
						\end{equation}
						where $F,G$ are defined by
						\begin{align}
						F&:=\int_0^t(E_N-V_N)(X_N+B(s))\,ds,\nonumber\\
						G&:=\int_0^t \left[\frac{N-1}{|x_N+b_N(s)|}-\sum_{i=1}^{N-1}\frac{1}{|x_i+b_i(s)-(x_N+b_N(s))|}\right]\,ds.\label{eq.Gint}
						\end{align}
						First, we estimate the contribution from the event $\{\sup_{s\le t}|b_{N}(s)|\ge \theta r_N\}$ for small $0<\theta<1$.
						Together with $\mathbb{P}(|b(t)|\ge A)\le Ce^{-cA^2/t}$ from~\cite[Eq.~(3.4')]{Simon} and L\'evy's maximal inequality~\cite[Thm.~3.6.5]{Simon}, we have
						\begin{equation}
							\label{eq.Bigb}
							\begin{split}
								&\E\left(\psi e^{F-G}|e^G -1|\mathbf1\left(\sup\limits_{s\le t}|b_{N}(s)|\ge \theta r_N\right) \right)\\
								&\le
								C\E\left(|e^{F-G}[e^G -1]|^2\right)^{1/2} \mathbb{P}(|b_{N}(t)|\ge c\theta r_N )^{1/2}\\
									&\le
									C\E\left(|e^{F-G}[e^G -1]|^2\right)^{1/2}\exp\left(-c_\theta\frac{r_N^2}{t}\right),
								\end{split}
							\end{equation}
							where we have also used $\|\psi\|_\infty<\infty$~\cite{Kato,AS}.
						 Using the fact that $\E(e^{a\int_0^t |x_i+b_i(s)|^{-1} \,ds})\le C_ae^{c_a t}$~\cite{Carmona}, the right-hand side in~\eqref{eq.Bigb} is bounded by
						 \begin{equation}
						 	\label{eq.Bigb2}
						 	\begin{split}
						 	\E\left(\psi e^{F-G}|e^G -1| \mathbf1\left(\sup_{s\le t}|b_{N}(s)|\ge \theta r_N\right)\right)
						 	&\le
						 	C\exp\left(Ct -c_\theta\frac{r_N^2}{t}\right)\\
						 	&\le
						 	C_{\theta,R}u(x_N)\exp(-c_{\theta,R}r_N)
						 	\end{split}
						 	\end{equation}
						 	for large $|x_N|$, where we have used the asymptotics~\eqref{eq.HOM},~\eqref{eq.threshold} and $r^{-a}\ln r\to 0$ for any $a>0$ as $r\to \infty$.

						 	To obtain the desired bound from the complementary event, we note from the subsolution estimate~\cite{AS,KalfHinz} that for any $X_N\in\R^{3N}$
						 	\begin{equation}
						 		\label{eq.subsol}
						 		\begin{split}
						 			\psi(X_N)
						 			&\le C\left(\int_{|X_N-Y_N|<1} \psi(Y_N)^2\, dY_N\right)^{1/2}
						 			\le C\left(\int_{|x_N-y_N|<1}\rho(y_N) \,dy_N\right)^{1/2} \\
						 			&\le
						 			C\sqrt{\rho(x_N)},
						 		\end{split}
						 	\end{equation}
						 	where we have used $\sup_{|x-y|<1}[\rho(y)/\rho(x)]\le C$ by the asymptotics~\eqref{eq.HOM} and~\eqref{eq.threshold}.
						 	Using the asymptotics~\eqref{eq.HOM} and~\eqref{eq.threshold} again, for $|b_N(t)|\le \theta r_N$, we have
						 	\[
						 	\psi(X_N+B(t))\le C\sqrt{\rho(x_N+b_N(t))}\le C_\theta\sqrt{\rho(x_N)}e^{c_N |b_N(t)|}\le C_\theta u(x_N)e^{c_N |b_N(t)|},
						 	\]
						 	where $c_N>0$ is certain constant.
						 	Then we infer that for sufficiently large $L=L(R)>0$
						 	\begin{equation}
						 		\label{eq.smallb}
						 		\begin{split}
						 			&\left|\E\left(\psi e^{F-G}[e^G -1]\mathbf1\left(\sup\limits_{s\le t}|b_{N}(s)|\le \theta r_N\right) \mathbf1\left(\max\limits_{1\le i\le N-1} \sup\limits_{s\le t}|b_i(s)|\ge L \ln r_N\right)\right)\right|
						 			\\
						 			&\le
						 			C_\theta u(x_N)\E\left(e^{2c_N |b_N(t)|}|e^{F-G}[e^G -1]|^2\right)^{1/2}\exp\left(-\frac{cL^2}{R} \ln r_N\right)\\
						 			&\le
						 			C_{\theta,R}\frac{u(x_N)}{r_N^{\beta_{L,R}}},
						 		\end{split}
						 	\end{equation}
						 	where we have used $\E(e^{a|b_N(t)|})\le C_a e^{c_a t}$ and chosen $L>0$ so large that $\beta_{L,R}>2$.

						 		Combining~\eqref{eq.Bigb2} with~\eqref{eq.smallb}, we arrive at
						\begin{equation}
							\label{eq.FKest}
							\begin{split}
						&|\psi(X_N)-\widetilde \psi(X_N;t)|\\
						&\le
						\left|
						\E\left(\psi e^{F-G}[e^G -1]\mathbf1\left(\sup\limits_{s\le t}|b_{N}(s)|\le \theta r_N\right) \mathbf1\left(\max_{1\le i\le N-1} \sup_{s\le t}|b_i(s)|\le L \ln r_N\right)\right)
						\right|\\
						&\quad+C_{\theta,R}\frac{u(x_N)}{r_N^{\beta_{L,R}}}
						\end{split}
						\end{equation}
						for sufficiently large $L$.

						Our next task is to estimate the first term on the right-hand side in~\eqref{eq.FKest}.
						Taylor's formula gives
						\[
						\frac{1}{|x-y|}
						=
						\frac{1}{|x|}+\frac{x\cdot y}{|x|^3}
						+O\left(\frac{|y|^2}{|x|^{3}}\right)
						\]
						for $x,y \in \R^3$ with $|y|\le \sigma|x|$ $(0<\sigma<1)$.
						Now we consider the case $0<\alpha<1$.
						On the event of $\sup_{s\le t}|b_{N}(s)|\le \theta r_N$ and $\max_{1\le i\le N-1} \sup_{s\le t}|b_i(s)|\le L \ln r_N$, the integrand in~\eqref{eq.Gint} admits the following estimate
						\begin{equation}
							\label{eq.Taylor}
							\begin{split}
								&\frac{N-1}{|x_N+b_N(s)|}-\sum_{i=1}^{N-1}\frac{1}{|x_i+b_i(s)-(x_N+b_N(s))|}
								\\&=
								-\sum_{i=1}^{N-1}\frac{(x_i+b_i(s))\cdot (x_N+b_N(s))}{|x_N+b_N(s)|^3}
								+O\left(r_N^{-3+2\alpha}\right)\\
								&=
								O\left(r_N^{-2+\alpha} \right).
								\end{split}
								\end{equation}
					Consequently, on this event,
					\begin{equation}
						\label{eq.lem1G}
						\begin{split}
							\left|e^G-1\right|
							\le
							Ctr_N^{-2+\alpha}
							\le C_R\frac{\ln r_N}{r_N^{2-\alpha}}.
							\end{split}
							\end{equation}
							Finally, we note $e^{F-G}|e^G-1|\le C_Re^F|e^G-1|$ on this event by~\eqref{eq.Taylor}.
							Combining~\eqref{eq.FKest} and~\eqref{eq.lem1G} with $\E(\psi(X_N+B(t)) e^F)=\psi (X_N)\le Cu$, we complete the proof for $0<\alpha<1$.

							When $\alpha=0$, the only change required is to replace $O(r_N^{-2+\alpha})$ in~\eqref{eq.Taylor} with $O(r_N^{-2} \ln r_N)$.
							This proves the assertion for $\alpha=0$ as well.
					\end{proof}

The following lemma is essentially the same as~\cite[Lemma~3.2]{LS}, but we include the proof for completeness.

\begin{lemma}
	\label{lem.proj}
	For every $t\ge 1$, $f\in L^2(\R^{3N-3})$, and $X_{N-1}\in\R^{3N-3}$, one has
	\[
	\left|
	(e^{-tH_{N-1}} f)(X_{N-1})
	-(\phi, f)_{L^2(\R^{3N-3})} e^{-tE_{N-1}} \phi(X_{N-1})
	\right|
	\le
	Ce^{-tE_{N-1}^{(1)}}\|f\|_{L^2(\R^{3N-3})},
	\]
	where $E^{(1)}_{N-1}:=\inf \mathrm{spec}(H_{N-1}\upharpoonright\{\phi\}^{\perp})$ stands for the first  excited energy of $H_{N-1}$, and $E_{N-1}<E^{(1)}_{N-1}$.
	\end{lemma}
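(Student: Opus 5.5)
The plan is to combine the spectral theorem on $\{\phi\}^\perp$ with the ultracontractivity (an $L^2\to L^\infty$ bound) of the Schr\"odinger semigroup. Write $P:=(\phi,\cdot)\phi$ for the projection onto the ground state, and set $Q:=1-P$. Then
\[
e^{-tH_{N-1}}f-(\phi,f)e^{-tE_{N-1}}\phi=e^{-tH_{N-1}}Qf .
\]
The task is therefore to bound the sup norm of $e^{-tH_{N-1}}Qf$ by $Ce^{-tE_{N-1}^{(1)}}\|f\|_2$ for $t\ge1$. The gap $E_{N-1}<E^{(1)}_{N-1}$ is available from the earlier remark: $E_{N-1}$ is an isolated eigenvalue under the standing assumption $E_{N-1}<E_{N-2}$, and it is simple because the ground state is positive.

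Split $t=1/2+(t-1/2)$ and write
\[
e^{-tH_{N-1}}Qf=e^{-\frac12 H_{N-1}}\bigl(e^{-(t-\frac12)H_{N-1}}Qf\bigr).
\]
Since $Q$ commutes with $H_{N-1}$, the spectral theorem gives
\[
\|e^{-(t-\frac12)H_{N-1}}Qf\|_2\le e^{-(t-\frac12)E^{(1)}_{N-1}}\|f\|_2 .
\]
It remains to show that $e^{-\frac12 H_{N-1}}$ maps $L^2(\R^{3N-3})$ boundedly into $L^\infty$. I would get this from the Feynman--Kac formula as used in Lemma~\ref{lem.first}. For $g\in L^2$,
\[
|(e^{-sH_{N-1}}g)(X)|\le \E\Bigl(e^{-\int_0^s V_{N-1}(X+B(r))dr}|g(X+B(s))|\Bigr)
\]
with $s=1/2$. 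Apply Cauchy--Schwarz in the Brownian expectation. The first factor is $\E(e^{-2\int_0^s V_{N-1}})^{1/2}$, which is bounded uniformly in $X$. The reason is that the repulsive terms only help, and $\sup_X\E(e^{a\int_0^s|x_i+b_i|^{-1}})<\infty$ is the Carmona-type bound already quoted in the proof of Lemma~\ref{lem.first}; one applies H\"older across the finitely many nuclear terms. The second factor is
\[
\E(|g(X+B(s))|^2)^{1/2}=\bigl(\int p_s(X-Y)|g(Y)|^2dY\bigr)^{1/2}\le \|p_s\|_\infty^{1/2}\|g\|_2 ,
\]
where $p_s$ is the Gaussian heat kernel. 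Together these give $\|e^{-\frac12H_{N-1}}\|_{2\to\infty}\le C$. Equivalently, one may cite the standard $L^2\to L^\infty$ smoothing of Kato-class Schr\"odinger semigroups (Simon), or the subsolution estimate~\eqref{eq.subsol}.

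Combining the two steps yields the bound $Ce^{E^{(1)}_{N-1}/2}e^{-tE^{(1)}_{N-1}}\|f\|_2$, and the constant factor is absorbed into $C$. This holds pointwise for every $X_{N-1}$ once we fix the continuous version of $e^{-tH_{N-1}}f$, which the Feynman--Kac representation provides. The only slightly delicate point is the uniform-in-$X$ exponential moment of the Coulomb potential. This is where the Kato-class property of $|x|^{-1}$ in three dimensions enters, and I expect it to be handled by citing the estimate already used in Lemma~\ref{lem.first}.
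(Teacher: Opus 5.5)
Your proposal is correct and follows the paper's proof. Like the paper, you write the difference as $e^{-tH_{N-1}}Qf$, peel off a fixed-time semigroup factor that is bounded $L^2\to L^\infty$, and use the spectral gap on $\{\phi\}^\perp$ for the remaining time. The only differences are cosmetic: the paper splits off $e^{-H_{N-1}}$ and cites \cite[Theorem~B.1.1]{SimonS} for its ultracontractivity, whereas you split off time $1/2$ and verify the $L^2\to L^\infty$ bound directly via Feynman--Kac, Cauchy--Schwarz, and the uniform exponential moment bound for the Coulomb potential.
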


	\begin{proof}
		Let $P=|\phi\rangle \langle \phi| $ be the projection onto $\phi$, and let $Q:=1-P$.
		Since $e^{-H_{N-1}}$ is bounded from $L^2$ to $L^\infty$~\cite[Theorem~B.1.1]{SimonS}, we see
		\begin{align*}
			\|e^{-tH_{N-1}} Q f \|_{\infty}
			&\le
			\|e^{-H_{N-1}}  \|_{L^2\to L^\infty}
			\|e^{-(t-1)H_{N-1}} Q f \|_{L^2}\\
			&\le
			\|e^{-H_{N-1}}  \|_{L^2\to L^\infty}
			e^{-(t-1)E_{N-1}^{(1)}}
			\|Q f \|_{L^2}\\
			&\le
			Ce^{-tE_{N-1}^{(1)}}\|f \|_{L^2}.
			\end{align*}
			By definition, we have $e^{-tH_{N-1}} Qf =e^{-tH_{N-1}} f-(\phi, f)e^{-tE_{N-1}}\phi$, which shows the desired result.
		\end{proof}

		Let $\E_N$ denote a Brownian expectation with respect to $b_N$ and define
		\[
		\chi(x_N;t):=
		\E_N \left(\exp\left[\int_0^tU(x_N+b_N(s))\,ds\right]  u(x_N+b_N(t)) \right),
		\]
		where $u$ is given in~\eqref{def.u}.

		 We next approximate $\widetilde \psi$  in the following sense.

		\begin{lemma}
			\label{lem.PtoFK}
		Fix $0\le\alpha <1$ and $R>0$, and let $\mu_R:=R(E_{N-1}^{(1)}-E_{N-1})>0$.
		Then, for all sufficiently large $|x_N|$
		\[
		\sup_{X_{N-1}\in A_{N,\alpha}}	\left|
		\widetilde\psi(X_{N};t) -e^{-t\epsilon_N}\chi(x_N;t)\phi(X_{N-1})
		\right|
		\le
		Cr_N^{-\mu_R}e^{-t\epsilon_N}\chi(x_N;t),
		\]
		where $t=R \ln r_N$ and $\epsilon_N=E_{N-1}-E_N$.
		\end{lemma}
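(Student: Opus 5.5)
The plan is to factor the Feynman--Kac expectation defining $\widetilde\psi$. Since $\widetilde V_N = V_{N-1}(X_{N-1}) - U(x_N)$, the potential splits into a part depending only on $(b_1,\dots,b_{N-1})$ and a part depending only on $b_N$. Conditioning on $b_N$ and using the semigroup representation in the first $N-1$ variables, I would write
\[
\widetilde\psi(X_N;t)=e^{tE_N}\,\E_N\Bigl(e^{\int_0^t U(x_N+b_N(s))\,ds}\,\bigl(e^{-tH_{N-1}}\psi(\cdot,x_N+b_N(t))\bigr)(X_{N-1})\Bigr).
\]
This identity holds for every fixed $X_{N-1}$, and Fubini is justified because $U$ is bounded on the relevant range; where $|x_N+b_N|$ is small, $U$ is still integrable by the estimate of~\cite{Carmona} quoted in Lemma~\ref{lem.first}.

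Next I would apply Lemma~\ref{lem.proj} with $f=\psi(\cdot,y)$, $y=x_N+b_N(t)$. Here $(\phi,f)=u(y)$ by~\eqref{def.u}. This gives
\[
e^{-tH_{N-1}}f(X_{N-1}) = e^{-tE_{N-1}}\bigl(u(y)\phi(X_{N-1}) + O\bigl(e^{-t(E^{(1)}_{N-1}-E_{N-1})}\sqrt{\rho(y)}\bigr)\bigr),
\]
uniformly in $X_{N-1}$, and in fact globally, not only on $A_{N,\alpha}$. Since $\|f\|_{L^2}=\sqrt{\rho(y)}$, the main term reproduces exactly $e^{-t\epsilon_N}\chi(x_N;t)\phi(X_{N-1})$. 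With $t=R\ln r_N$, the error factor $e^{-t(E^{(1)}_{N-1}-E_{N-1})}$ equals $r_N^{-\mu_R}$. The remaining error is therefore
\[
r_N^{-\mu_R}e^{-t\epsilon_N}\,\E_N\Bigl(e^{\int_0^tU}\sqrt{\rho(x_N+b_N(t))}\Bigr).
\]

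The main obstacle is comparing this last expectation with $\chi(x_N;t)$, which has $u$ in place of $\sqrt\rho$. The bounds~\eqref{eq.HOM} and~\eqref{eq.threshold} give $\sqrt{\rho}\le Cu$ only for large argument. I would split on the event $\{\sup_{s\le t}|b_N(s)|\le\theta r_N\}$. On this event $|x_N+b_N(t)|\ge(1-\theta)r_N$ is large, so $\sqrt\rho\le Cu$ pointwise, and this part is at most $C\chi$.

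On the complement, $\sqrt\rho$ is bounded and $e^{\int_0^t U}\le e^{Ct}$ in expectation, again by~\cite{Carmona}. The probability is at most $e^{-c\theta^2r_N^2/t}$, so this part is $O(e^{Ct-cr_N^2/\ln r_N})$. It must be dominated by $\chi$, and for that I need a lower bound on $\chi$. I would restrict to the event $\{\sup_{s\le t}|b_N(s)|\le 1\}$, which has probability at least $e^{-Ct}$, a polynomial in $r_N$. On it, $U\ge -C/r_N$ and $u(x_N+b_N(t))\ge c\,u(x_N)$ by the asymptotics. Hence $\chi\ge c\,r_N^{-C}u(x_N)$, and $u(x_N)$ decays at most like $e^{-Cr_N}$, which is far larger than $e^{-cr_N^2/\ln r_N}$. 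Combining the two parts gives $\E_N(e^{\int U}\sqrt\rho)\le C\chi$, and the lemma follows.
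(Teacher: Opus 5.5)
Your proof is correct. Its core is the same as the paper's: you condition on $b_N$ to write $\widetilde\psi=e^{tE_N}\E_N\bigl(e^{\int_0^tU}\,(e^{-tH_{N-1}}\psi_{x_N+b_N(t)})(X_{N-1})\bigr)$, then apply Lemma~\ref{lem.proj} with $f=\psi_y$, $(\phi,f)=u(y)$ and $\|f\|_2=\sqrt{\rho(y)}$. This gives the error term $Cr_N^{-\mu_R}e^{-t\epsilon_N}\E_N\bigl(e^{\int_0^tU}\sqrt{\rho(x_N+b_N(t))}\bigr)$.

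The only divergence is the final comparison of this expectation with $\chi$, and here you do more work than necessary. The paper handles it in one line: $\sqrt{\rho}\le Cu$ holds on all of $\R^3$, not only for large argument. Indeed, $u=\int\phi\psi\,dX_{N-1}$ is continuous and strictly positive (both ground states are positive), and $\rho$ is continuous and bounded. So $\sqrt{\rho}/u$ is bounded on every compact set, and \eqref{eq.HOM}, \eqref{eq.threshold} control it outside; this global bound is also recorded in the appendix lemma on lower bounds. Since all integrands are nonnegative, the pointwise inequality immediately gives $\E_N\bigl(e^{\int_0^tU}\sqrt{\rho}\bigr)\le C\chi$.

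Your event splitting is valid. You control the complement by Cauchy--Schwarz with Carmona's moment bound and L\'evy's inequality, and you bound $\chi$ from below by confining $b_N$ to the unit ball, an event of probability at least $ce^{-Ct}=cr_N^{-CR}$. This even shows the far-excursion contribution is $o(\chi)$. However, it uses the same asymptotic input as the paper plus extra Brownian estimates, so it gains nothing here. The observation you missed is simply that $u$ is positive and continuous on compact sets.
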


		\begin{proof}
			Let $P=|\phi\rangle\langle\phi|$ and $Q=1-P$, and write
			$\psi_y:=\psi(\,\cdot\,,y)$.
			Since $P\psi_y=u(y)\phi$, Lemma~\ref{lem.proj} gives
			\begin{equation}
				\label{eq.PFK}
				\begin{split}
					&\left|
			\widetilde\psi(X_{N};t) -e^{-t\epsilon_N}\chi(x_N;t)\phi(X_{N-1})
			\right|\\
			&=
			\left|
			\E_N\left(e^{\int_0^tU(x_N+b_N(s))\,ds}(e^{-t(H_{N-1} -E_N)}Q \psi_{x_N+b_N(t)})(X_{N-1})\right)
			\right|
			\\
			&\le
			Ce^{-t(E_{N-1}^{(1)} -E_N)}\E_N(e^{\int_0^tU(x_N+b_N(s))\,ds} \|\psi_{x_N+b_N(t)}\|_{2}).
			\end{split}
			\end{equation}
			We note that $\sqrt{\rho(x_N)}\le Cu(x_N)$ for all $x_N$ from~\eqref{eq.HOM},~\eqref{eq.threshold} and continuity of the strictly positive $u$ and $\rho$.
			Hence we see
			\begin{equation}
				\label{eq.PFK2}
				\begin{split}
			&e^{-t(E_{N-1}^{(1)} -E_N)}\E_N(e^{\int_0^tU(x_N+b_N(s))\,ds} \|\psi_{x_N+b_N(t)}\|_{2})\\
			&=e^{-t \epsilon_N}e^{-t(E_{N-1}^{(1)} -E_{N-1})}\E_N(e^{\int_0^tU(x_N+b_N(s))\,ds} \sqrt{\rho(x_N+b_N(t))})\\
			&\le
			Cr_N^{-\mu_R}e^{-t \epsilon_N} \chi(x_N;t),
			\end{split}
			\end{equation}
			which shows the lemma.
			\end{proof}
			We choose $R>1$ so that $\mu_R>2$, and let $\eta(x_N):=e^{-\epsilon_NR\ln r_N } \chi(x_N;R\ln r_N)$ for brevity.
			Now we can state the following pointwise approximation.

			\begin{lemma}
				\label{lem.PC}
				For fixed $0\le\alpha <1$, we have
				\[
				\sup_{X_{N-1}\in A_{N, \alpha}}\left|
				{\psi(X_N)}-\eta(x_N)\phi(X_{N-1})
				\right|
				\le
				C\frac{f_\alpha(r_N)}{r_N^{2-\alpha}}(u(x_N)+\eta(x_N))
				\]
				for sufficiently large $|x_N|$.
				\end{lemma}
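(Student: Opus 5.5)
The plan is a triangle inequality with $t=R\ln r_N$, where $R>1$ is the fixed constant chosen above so that $\mu_R>2$. For $X_{N-1}\in A_{N,\alpha}$ we write
\[
|\psi(X_N)-\eta(x_N)\phi(X_{N-1})|\le |\psi(X_N)-\widetilde\psi(X_N;t)|+|\widetilde\psi(X_N;t)-e^{-t\epsilon_N}\chi(x_N;t)\phi(X_{N-1})|.
\]
By definition, $\eta(x_N)=e^{-t\epsilon_N}\chi(x_N;t)$ for this choice of $t$, so the second term is exactly the quantity treated in Lemma~\ref{lem.PtoFK}. Each term is then estimated by an earlier lemma.

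For the first term, the hypotheses of Lemma~\ref{lem.first} hold: $R>1$ and $0<t\le R\ln r_N$. That lemma gives, uniformly on $A_{N,\alpha}$,
\[
|\psi(X_N)-\widetilde\psi(X_N;t)|\le C_R\, f_\alpha(r_N)\,r_N^{-2+\alpha}\,u(x_N)
\]
for sufficiently large $|x_N|$. Since $R$ is now fixed, $C_R$ is just a constant $C$. This produces the $u(x_N)$ part of the claimed bound.

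For the second term, Lemma~\ref{lem.PtoFK} with the same $R$ gives, uniformly on $A_{N,\alpha}$,
\[
|\widetilde\psi(X_N;t)-\eta(x_N)\phi(X_{N-1})|\le C r_N^{-\mu_R}\eta(x_N).
\]
Because $\mu_R>2$ and $0\le\alpha<1$, we have $r_N^{-\mu_R}\le r_N^{-2+\alpha}f_\alpha(r_N)$ for large $r_N$. Indeed, $f_\alpha(r_N)\ge \ln r_N\ge 1$ when $r_N\ge e$, and $-\mu_R<-2\le -2+\alpha$. Adding the two bounds yields the stated estimate with $u(x_N)+\eta(x_N)$ on the right-hand side.

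The only point needing care is that both lemmas are applied with the same $t=R\ln r_N$ and the same fixed $R$. The large-$|x_N|$ thresholds of the two lemmas are then simply combined by taking the larger one. No real obstacle is expected, since the analytical work is already contained in Lemmas~\ref{lem.first} and~\ref{lem.PtoFK}.
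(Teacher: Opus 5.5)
Your proposal is correct and matches the paper's proof: set $t=R\ln r_N$, apply the triangle inequality, bound the first term by Lemma~\ref{lem.first} and the second by Lemma~\ref{lem.PtoFK}, then absorb $r_N^{-\mu_R}$ into $f_\alpha(r_N)r_N^{-2+\alpha}$ using $\mu_R>2$. Your explicit check that $f_\alpha(r_N)\ge 1$ for $r_N\ge e$ makes precise the step the paper leaves implicit.
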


				\begin{proof}
					Set $t=R\ln r_N$.
					Combining Lemma~\ref{lem.first} with Lemma~\ref{lem.PtoFK}, we have
					\begin{equation}
						\label{eq.com}
						\begin{split}
							&\sup_{X_{N-1}\in A_{N, \alpha}}\left|
							{\psi(X_N)}-\eta(x_N)\phi(X_{N-1})
							\right|\\
							&\le
							\sup_{X_{N-1}\in A_{N, \alpha}}\left|
						\psi(X_N)-\widetilde\psi(X_N;t)
							\right|+
							\sup_{X_{N-1}\in A_{N, \alpha}}\left|
							\widetilde\psi(X_N;t)-\eta(x_N)\phi(X_{N-1})
							\right|\\
							&\le
							C\left(\frac{f_\alpha(r_N)}{r_N^{2-\alpha}}u(x_N)
							+\frac{\eta(x_N)}{r_N^{\mu_R}}\right),
							\end{split}
							\end{equation}
							where we have taken $\mu_R>2$. This yields the desired result.
					\end{proof}
					It remains to replace $\eta$ by $u$.

					\begin{lemma}
						\label{lem.appro.u}
						Fix $0\le \alpha<1$.
						For sufficiently large $|x_N|$, we have
						\[
						\left|
						\eta(x_N)-u(x_N)
						\right|
						\le C \frac{f_\alpha(r_N)}{r_N^{2-\alpha}}u(x_N).
						\]
						\end{lemma}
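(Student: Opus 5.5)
The plan is to test the pointwise approximation of Lemma~\ref{lem.PC} against $\phi$ and integrate over $X_{N-1}$. By \eqref{def.u} and the normalization $\|\phi\|_2=1$, this recovers $u(x_N)$ on the left and $\eta(x_N)$ on the right.

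First, a reduction. Since $f_0(r)r^{-2}=(\ln r)^2r^{-2}\le C\ln r\, r^{-2+\alpha}$ for every $0<\alpha<1$ and large $r$, it suffices to prove
\[
|\eta(x_N)-u(x_N)|\le C(\ln r_N)^2 r_N^{-2}u(x_N).
\]
The function $\eta$ does not depend on the region $A_{N,\alpha}$ or on $R_0$. Hence we may apply Lemma~\ref{lem.PC} with $\alpha=0$ and a constant $R_0'$ of our own choosing in place of $R_0$; the constant $C$ then depends on $R_0'$, which is harmless. We fix $R_0'$ so large that $2\sigma R_0'>2$, where $\sigma>0$ is an exponential decay rate of $\phi$. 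Such a rate exists by \eqref{eq.Agmon} applied to $H_{N-1}$, using $E_{N-1}<E_{N-2}$.

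Write $A=A_{N,0}(r_N)$ defined with $R_0'$, and split
\[
u(x_N)-\eta(x_N)=\int_A(\psi-\eta\phi)\phi\,dX_{N-1}+\int_{A^c}\psi\phi\,dX_{N-1}-\eta\int_{A^c}\phi^2\,dX_{N-1}.
\]
The three terms are handled as follows.
\begin{itemize}
\item On $A$, Lemma~\ref{lem.PC} together with $\phi\in L^1$ (again by exponential decay) bounds the first term by $C(\ln r_N)^2r_N^{-2}(u+\eta)$.
\item For the complement terms, the exponential decay of $\phi$ gives $\|\phi\mathbf 1_{A^c}\|_2^2\le Cr_N^{-2\sigma R_0'}\le Cr_N^{-2}$, so the third term is at most $Cr_N^{-2}\eta$.
\item By Cauchy--Schwarz, the second term is at most $\|\phi\mathbf 1_{A^c}\|_2\sqrt{\rho(x_N)}$. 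Using $\sqrt\rho\le Cu$ from \eqref{eq.HOM} and \eqref{eq.threshold}, this gives at most $Cr_N^{-\sigma R_0'}u\le Cr_N^{-1-\delta}u$.
\end{itemize}
With $R_0'$ chosen larger, namely $\sigma R_0'\ge 2$, the second term is also $O(r_N^{-2})u$.

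Collecting the three bounds yields
\[
|u-\eta|\le \varepsilon(r_N)(u+\eta),\qquad \varepsilon(r)=C(\ln r)^2r^{-2}\to0.
\]
For large $r_N$ we have $\varepsilon\le 1/2$, and rearranging gives $\eta\le 3u$. Substituting this back into the right-hand side gives the claimed bound $|\eta-u|\le C(\ln r_N)^2r_N^{-2}u$, and by the reduction above this implies the estimate for all $0\le\alpha<1$.

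The main obstacle is the tail region $A^c$. The decay rate of $\phi$ must beat $r_N^{-2}$, and this works only because we may enlarge $R_0$ freely. That freedom in turn rests on checking that the constant in Lemma~\ref{lem.PC} (through Lemma~\ref{lem.first}) remains finite for the new $R_0'$: the Taylor step \eqref{eq.Taylor} only uses $|x_i|\le R_0'\ln r_N\ll r_N$, so this should go through.
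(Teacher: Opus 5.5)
Your proof is correct and follows essentially the paper's argument. You test Lemma~\ref{lem.PC} against $\phi$, control the tail by Cauchy--Schwarz and the exponential decay of $\phi$ (enlarging $R_0$ for $\alpha=0$, as the paper does with $\widetilde R_0$), and absorb $\eta$ to get $\eta\le Cu$. The only deviation is that you reduce $0<\alpha<1$ to the case $\alpha=0$, which is legitimate because $\eta$ and $u$ do not depend on $\alpha$ and $(\ln r)^2r^{-2}\le \ln r\, r^{-2+\alpha}$ eventually; the paper instead works on $A_{N,\alpha}$ directly with tails $e^{-cr_N^\alpha}$. You should take $\sigma R_0'>2$ strictly, to absorb the polylogarithmic factor you dropped from $\|\phi\mathbf 1_{A^c}\|_2$.
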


						\begin{proof}
							First, we consider the case $0<\alpha<1$.
								Using the asymptotics~\eqref{eq.Agmon} for $\phi$, we see $\phi(X_{N-1})\le C_\sigma\exp(-\sigma|X_{N-1}|)$ and deduce from the Cauchy--Schwarz inequality that
							\begin{equation}
								\label{eq.ubound}
								\begin{split}
								\int_{|x_i|\ge r_N^\alpha} \psi(X_N)\phi(X_{N-1})\,dx_1\cdots dx_{N-1}
								&\le
								C\sqrt{\rho(x_N)}\exp(-c_\sigma r_N^\alpha),
								\end{split}
								\end{equation}
								for some $c_\sigma>0$.
								Using this and the exponential decay of $\phi$, we have
								\begin{equation}
									\label{eq.ueta}
									\begin{split}
										|\eta(x_N)-u(x_N)|
										&=
										\left|
										\int_{\R^{3N-3}}\phi(X_{N-1})\left(\eta(x_N) \phi(X_{N-1}) -\psi(X_N)\right)\,dX_{N-1}
										\right|\\
										&\le
										\int_{A_{N,\alpha}} \phi(X_{N-1})\left|\eta(x_N) \phi(X_{N-1}) -\psi(X_N)\right|\,dX_{N-1}\\
										&\quad+\sum_{i=1}^{N-1}\int_{|x_i|\ge r_N^\alpha}  \phi(X_{N-1})\left|\eta(x_N) \phi(X_{N-1}) -\psi(X_N)\right|\,dX_{N-1}\\
										&\le
										\int_{A_{N,\alpha}} \phi(X_{N-1})\left|\eta(x_N) \phi(X_{N-1}) -\psi(X_N)\right|\,dX_{N-1}\\
										&\quad+
										C\left(\eta (x_N)\exp(-2c_\sigma r_N^\alpha)+\sqrt{\rho(x_N)}\exp(-c_\sigma r_N^\alpha)\right).
									\end{split}
								\end{equation}
								For the first term on the right-hand side in~\eqref{eq.ueta}, Lemma~\ref{lem.PC} implies that
								\begin{equation}
									\label{eq.appLempc}
									\begin{split}
										\int_{A_{N,\alpha}} \phi(X_{N-1})\left|\eta(x_N) \phi(X_{N-1}) -\psi(X_N)\right|\,dX_{N-1}
										&\le
										Cf_\alpha(r_N)\frac{\eta(x_N)+u(x_N)}{r_N^{2-\alpha}},
										\end{split}
										\end{equation}
										where we have used $\phi \in L^1(\R^{3N-3})$.
										Together with~\eqref{eq.ueta} and~\eqref{eq.appLempc}, we arrive at
										\begin{equation}
											\label{eq.conc}
										|\eta(x_N)-u(x_N)|
										\le
										Cf_\alpha(r_N)\frac{\eta(x_N)+u(x_N)}{r_N^{2-\alpha}}+Cu(x_N)\exp(-cr_N^\alpha),
										\end{equation}
										where we have used $\sqrt{\rho(x_N)}\le C u(x_N)$.
										This also implies that
										\[
										\eta(x_N)\le |\eta(x_N)-u(x_N)| +u(x_N)\le Cr_N^{-1}\eta(x_N)+Cu(x_N),
										\]
										and thus $\eta(x_N)\le Cu(x_N)$ for sufficiently large $|x_N|$.
										Substituting this estimate into~\eqref{eq.conc} proves the desired result for $0<\alpha<1$.

										For  $\alpha=0$, we replace $A_{N,0}$ with $\widetilde A_{N,0}:=\{|x_i|\le \widetilde R_0 \ln r_N, \,\forall 1\le i\le N-1\}$ for sufficiently large $\widetilde R_0>R_0$.
										Replacing $\exp(-c_\sigma r_N^\alpha)$ in~\eqref{eq.ubound}--\eqref{eq.conc} by $r_N^{-c_\sigma\widetilde R_0}$ and choosing $c_\sigma\widetilde R_0>2$, we obtain the assertion for $\alpha=0$.
										Therefore, the proof is complete.
							\end{proof}

							We are now ready to prove the main theorem.
							\begin{proof}[Proof of Theorem~\ref{thm.main}]
								Combining Lemma~\ref{lem.PC} and Lemma~\ref{lem.appro.u}, we have
								\begin{equation}
									\label{eq.con}
									\begin{split}
										\sup_{X_{N-1}\in A_{N, \alpha}}\left|
										{\psi(X_N)}-u(x_N)\phi(X_{N-1})
										\right|
										&\le
										\sup_{X_{N-1}\in A_{N, \alpha}}\left|
										{\psi(X_N)}-\eta(x_N)\phi(X_{N-1})
										\right|\\
										&\quad+
										\|\phi\|_\infty\left|
										\eta(x_N)
										-u(x_N)
										\right|\\
									&\le
									C\frac{f_\alpha(r_N)}{r_N^{2-\alpha}}u(x_N),
									\end{split}
									\end{equation}
									where we have used the boundedness of $\phi$~\cite{Kato,AS} and ${\eta}(x_N)\le Cu(x_N)$ as established in the proof of Lemma~\ref{lem.appro.u}.
									Dividing by $u(x_N)>0$ completes the proof.
							\end{proof}

							\section{Proof of Theorem~\ref{th.L2}}
							\label{sect.L2}
							In this section, we prove Theorem~\ref{th.L2}.
							Following~\cite[Lemma~2.2]{CHO}, we obtain a Schr\"odinger inequality for $\sqrt{\rho}$.

							\begin{lemma}
								\label{lem.Schroineq}
								For sufficiently large $|x_N|$, we have
								\begin{align*}
									\left(-\frac{\Delta_N}{2}-\frac{Z-N+1}{r_N}+\epsilon_N+\frac{E_{N-1}^{(1)}-E_{N-1}}{2}\left(1-\frac{u^2}{\rho}\right)\right)
									\sqrt{\rho}
									\le C\frac{\sqrt{\rho}}{r_N^{2}}
									\end{align*}
									in the distributional sense, where $E_{N-1}^{(1)}$ is the first excited energy of $H_{N-1}$ defined in Lemma~\ref{lem.proj}.
								\end{lemma}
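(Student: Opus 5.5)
The plan is to write $f:=\sqrt{\rho}$ as the $L^2(\R^{3N-3})$-norm of the fibre $\psi_y:=\psi(\,\cdot\,,y)$, so that $f(y)=\|\psi_y\|_2$. I would then combine a Kato-type inequality in the variable $y=x_N$ with the eigenvalue equation and a decomposition of $\psi_y$ along $\phi$.

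\emph{Step 1: Kato-type inequality.} Since $|\nabla_y\|\psi_y\|_2|\le\|\nabla_y\psi_y\|_2$, one gets in the distributional sense on $\{\rho>0\}=\R^3$
\[
-\tfrac12\Delta_N f\le \frac{\mathrm{Re}\,\langle\psi_y,-\tfrac12\Delta_N\psi_y\rangle}{f}.
\]
I would justify this by regularizing, replacing $f$ with $f_\delta:=(\rho+\delta^2)^{1/2}$, computing $\Delta f_\delta$, and letting $\delta\to0$. This is standard, as in \cite[Lemma~2.2]{CHO}.

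\emph{Step 2: Insert the equation.} Write $H_N=H_{N-1}-\tfrac12\Delta_N+I_y$ with $I_y:=-Z/|y|+\sum_{i<N}|x_i-y|^{-1}$. Set
\[
W:=\sum_{i=1}^{N-1}\Bigl(\frac{1}{|x_i-y|}-\frac1{|y|}\Bigr),\qquad\text{so that}\qquad I_y=-\frac{Z-N+1}{|y|}+W.
\]
The equation $H_N\psi=E_N\psi$ then gives
\[
\langle\psi_y,-\tfrac12\Delta_N\psi_y\rangle=E_N\rho-\langle\psi_y,H_{N-1}\psi_y\rangle+\frac{Z-N+1}{r_N}\rho-\langle\psi_y,W\psi_y\rangle.
\]
Decompose $\psi_y=u(y)\phi+g_y$ with $g_y:=Q\psi_y$, so that $\|g_y\|^2=\rho-u^2$. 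The spectral theorem gives
\[
\langle\psi_y,H_{N-1}\psi_y\rangle\ge E_{N-1}\rho+\Delta E\,(\rho-u^2),\qquad \Delta E:=E^{(1)}_{N-1}-E_{N-1}>0.
\]
So the claim reduces to the lower bound
\[
\langle\psi_y,W\psi_y\rangle\ge -\tfrac{\Delta E}{2}(\rho-u^2)-C\rho/r_N^2 .
\]

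\emph{Step 3: Bound on $\langle\psi_y,W\psi_y\rangle$.} Expanding gives three terms.
\begin{itemize}
\item The $\phi$–$\phi$ term is $u^2\int\rho_\phi(x)\bigl(|x-y|^{-1}-|y|^{-1}\bigr)dx$. Since $\phi$ is the unique positive ground state, it is invariant under simultaneous rotations, so its one-particle density $\rho_\phi$ is radial. Newton's theorem then makes the integral equal to $-\int_{|x|>|y|}\rho_\phi(|y|^{-1}-|x|^{-1})$, which is $\ge -Ce^{-c r_N}$ by \eqref{eq.Agmon} for $\phi$.
\item The cross term is bounded by $2u\|W\phi\|\,\|g_y\|$. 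For the estimate $\|W\phi\|\le C r_N^{-2}$, I would split into the region $|x_i|\le r_N/2$, where Taylor gives $|W|\le C\sum|x_i|/r_N^2$ and $\phi$ has finite moments, and the complement, where $\phi$ is exponentially small and $|x-y|^{-1}$ is locally $L^2$ in $\R^3$ with $\phi$ bounded. Young's inequality then gives $\ge -\frac{\Delta E}{4}\|g_y\|^2-Cu^2 r_N^{-4}$.
\item The $g$–$g$ term satisfies $\langle g_y,Wg_y\rangle\ge-\frac{N-1}{r_N}\|g_y\|^2\ge-\frac{\Delta E}{4}\|g_y\|^2$ for $r_N$ large.
\end{itemize}
Collecting these, and using $u\le\sqrt\rho$, gives the needed bound with the factor $\Delta E/2$. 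Dividing by $f$ yields the lemma.

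I expect the main obstacle to be the cross-term estimate $\|W\phi\|\le Cr_N^{-2}$, specifically handling the Coulomb singularity at $x_i=y$ uniformly. Close behind is the rigorous distributional justification of Step 1, including the regularity of $y\mapsto\psi_y$ in $H^1_{\rm loc}$ needed to differentiate under the norm.
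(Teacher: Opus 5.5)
Your proposal is correct and follows essentially the same route as the paper: the Hoffmann-Ostenhof (Kato-type) inequality for $\sqrt{\rho}$, the spectral-gap bound $H_{N-1}-E_{N-1}\ge (E^{(1)}_{N-1}-E_{N-1})Q$, the decomposition $\psi_{x_N}=u\phi+Q\psi_{x_N}$ together with $\|W\phi\|_2\le Cr_N^{-2}$ (your $W$ is the paper's $F$), and absorbing the $-(N-1)/r_N$ contribution of the $Q\psi_{x_N}$ part into the spectral gap for large $r_N$. Your deviations are minor. Newton's theorem, via rotation invariance of $\phi$, makes the diagonal term exponentially small, and Young's inequality absorbs the cross term into the gap instead of the paper's direct bound $u\sqrt{\rho-u^2}\,r_N^{-2}\le \rho\, r_N^{-2}$. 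Together these give a right-hand side of order $\sqrt{\rho}\,r_N^{-4}$, stronger than the lemma requires.
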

								\begin{proof}
									The Schr\"odinger equation $(H_{N}-E_N)\psi=0$ yields
									\begin{align*}
										0&=\int_{\R^{3N-3}}\psi \left(H_{N-1}-E_{N-1}  \right)\psi\,dx_1\cdots{dx_{N-1}}\\
										&\quad+\int_{\R^{3N-3}}\psi \left(-\frac{\Delta_N}{2}-\frac{Z}{r_N}+\sum_{i=1}^{N-1}\frac{1}{|x_i-x_N|}+\epsilon_N  \right)\psi\,dx_1\cdots{dx_{N-1}},
										\end{align*}
										in the distributional sense with respect to $x_N$.
										Let $P=|\phi\rangle \langle \phi|$ and $Q=1-P$ be the projection as in Lemma~\ref{lem.proj}.
										By definition, we see
										\[
										H_{N-1}-E_{N-1} \ge \left(E_{N-1}^{(1)}-E_{N-1}\right)Q
										\]
										in the quadratic form sense.
										Then we infer that
										\begin{equation}
											\label{eqA.1}
											\int_{\R^{3N-3}}\psi \left(H_{N-1}-E_{N-1}  \right)\psi\,dx_1\cdots{dx_{N-1}}
											\ge \left(E_{N-1}^{(1)}-E_{N-1}\right)\left(\rho-u^2\right).
											\end{equation}
											Using the Hoffmann-Ostenhof inequality~\cite{HH}
											\[
											-\sqrt{\rho}\Delta_N \sqrt{\rho}\le
											-\int_{\R^{3N-3}}\psi(X_N)\Delta_N \psi(X_N)\,dx_1\cdots dx_{N-1},
											\]
											we have
											\begin{equation}
												\label{eqA.Sch1}
												\begin{split}
											0&\ge
											\sqrt{\rho}\left(-\frac{\Delta_N}{2}-\frac{Z}{r_N}+\epsilon_N\right)\sqrt{\rho}+
											\left(E_{N-1}^{(1)}-E_{N-1}\right)\left(\rho-u^2\right)\\
											&\quad+\sum_{i=1}^{N-1}\int_{\R^{3N-3}}\frac{|\psi(X_N)|^2}{|x_i-x_N|}\,dx_1\cdots{dx_{N-1}}.
											\end{split}
											\end{equation}
											Now $(f,g)$ denotes the $L^2(\R^{3N-3})$ inner product with respect to $X_{N-1}$.
											By a direct calculation, we have for any real-valued $W$
											\begin{align*}
												( \psi, W \psi)
												&=u^2(\phi,W\phi)
												+2u\mathrm{Re}\,(\phi,W Q\psi)+(Q\psi,W Q\psi).
												\end{align*}
												Applying this to $W=\sum_i|x_i-x_N|^{-1}$ implies that, with $V=(N-1)/r_N$ and $F:=-V+W$,
												\begin{align*}
													\sum_{i=1}^{N-1}\int_{\R^{3N-3}}\frac{|\psi(X_N)|^2}{|x_i-x_N|}\,dx_1\cdots{dx_{N-1}}
													&\ge
													u^2(\phi,W\phi) -2u\|F\phi\|_2 \|Q\psi\|_{L^2(\R^{3N-3})},
													\end{align*}
													where we have used $(\phi, Q\psi)=0$.
											Now we  split
											\begin{align*}
											&\int_{\R^{3N-3}}\left(\frac{1}{|x_i-x_N|}-\frac{1}{|x_N|} \right)^2 \phi(X_{N-1})^2\,dX_{N-1}\\
											&=\int_{2|x_i|<r_N}\left(\frac{1}{|x_i-x_N|}-\frac{1}{|x_N|} \right)^2 \phi(X_{N-1})^2\,dX_{N-1}\\
											&\quad+\int_{{2|x_i|>r_N}}\left(\frac{1}{|x_i-x_N|}-\frac{1}{|x_N|} \right)^2 \phi(X_{N-1})^2\,dX_{N-1}.
											\end{align*}
											From the exponential decay of $\phi$, we see
											\[
											\int_{{2|x_i|>r_N}}\left(\frac{1}{|x_i-x_N|}-\frac{1}{|x_N|} \right)^2 \phi(X_{N-1})^2\,dx_1\cdots dx_{N-1}
											\le C\exp(-c|x_N|),
											\]
											where we have used the local integrability of $|x|^{-2}$.
											On $\{2|x_i|<r_N\}$, we use
											$$\left |\frac{1}{|x_i-x_N|}-\frac{1}{|x_N|}\right|
											=\frac{\left|r_N-|x_i-x_N|\right|}{r_N|x_i-x_N|}
											\le
											\frac{|x_i|}{r_N(r_N-|x_i|)}
											\le
											 \frac{2|x_i|}{r_N^2}.$$
											 Together with $\||x_i|\phi\|_2<\infty$, we obtain
											\begin{equation}
												\label{eqA.phi}
												\begin{split}
											\int_{\R^{3N-3}}\left(\frac{1}{|x_i-x_N|}-\frac{1}{|x_N|} \right)^2 \phi(X_{N-1})^2\,dx_1\cdots dx_{N-1}
												&\le Cr_N^{-4}.
												\end{split}
												\end{equation}
												Summing over $i$, we deduce from the Cauchy--Schwarz inequality that $\|F \phi\|_2 \le C/r_N^{2}$ and thus
												\begin{equation}
													\label{eqA.Wpsi}
												\sum_{i=1}^{N-1}\int_{\R^{3N-3}}\frac{|\psi(X_N)|^2}{|x_i-x_N|}\,dX_{N-1}
												\ge
												\frac{N-1}{|x_N|}u(x_N)^2-\frac{C}{r_N^2}u(x_N)^2-C\frac{u(x_N)\sqrt{\rho(x_N)-u(x_N)^2}}{r_N^2},
												\end{equation}
												where we have used $\|Q\psi\|_2^2 = \rho-u^2$.

											Dividing~\eqref{eqA.Sch1} by $\sqrt{\rho}$ and using~\eqref{eqA.Wpsi} leads to
											\begin{equation}
												\label{eqA.2}
												\begin{split}
												\frac{C}{r_N^2}\sqrt{\rho}&\ge
												\left(-\frac{\Delta_N}{2}-\frac{Z-N+1}{r_N}+\epsilon_N\right)\sqrt{\rho}
												\\&
												\quad+\left(E_{N-1}^{(1)}-E_{N-1}-\frac{N-1}{|x_N|}\right)\left(1-\frac{u^2}{\rho}\right)\sqrt{\rho}
												\end{split}
												\end{equation}
												where we have used $u\le \sqrt{\rho}$ and $0\le 1-u^2/\rho \le 1$.
												Since $E_{N-1}^{(1)}-E_{N-1}>0$, we have
												\[
												\left(E_{N-1}^{(1)}-E_{N-1}-\frac{N-1}{|x_N|}\right)\left(1-\frac{u^2}{\rho}\right)\ge \frac{E_{N-1}^{(1)}-E_{N-1}}{2}\left(1-\frac{u^2}{\rho}\right)
												\]
												for $|x_N|$ large enough, which shows the conclusion.
									\end{proof}

									 The following is a simple estimate on the effective potential.
									\begin{lemma}
										\label{lemA.ubound}
										\begin{equation}
											\label{eqA.ubound}
											\left|\left(-\frac{\Delta_N}{2}-\frac{Z-N+1}{r_N}+\epsilon_N\right)u\right|
											\le
											C\frac{\sqrt{\rho}}{r_N^{2}}
										\end{equation}
										\end{lemma}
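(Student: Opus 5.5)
We will apply $-\Delta_N/2$ to $u(x_N)=(\phi,\psi_{x_N})$ under the integral, in the distributional sense in $x_N$, and use both Schrödinger equations. Since $H_N=H_{N-1}-\Delta_N/2-Z/r_N+W$ with $W=\sum_{i=1}^{N-1}|x_i-x_N|^{-1}$, the equation $(H_N-E_N)\psi=0$ paired with $\phi$ in the variables $X_{N-1}$ gives
\[
-\frac{\Delta_N}{2}u=\Bigl(\phi,\Bigl(E_N-H_{N-1}+\frac{Z}{r_N}-W\Bigr)\psi_{x_N}\Bigr).
\]
Self-adjointness and $H_{N-1}\phi=E_{N-1}\phi$ then give $(\phi,(E_N-H_{N-1})\psi)=-\epsilon_N u$. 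Hence
\[
\Bigl(-\frac{\Delta_N}{2}-\frac{Z-N+1}{r_N}+\epsilon_N\Bigr)u=-(\phi,F\psi_{x_N}),\qquad F=W-\frac{N-1}{r_N},
\]
which is the same $F$ as in the proof of Lemma~\ref{lem.Schroineq}. This cancellation of the $\epsilon_N$ and Coulomb-tail terms is the heart of the matter. To justify moving $H_{N-1}$ onto $\phi$, we would test against $g\in C_c^\infty$ in $x_N$, using $\psi\in H^2$ and $\phi\in H^2$.

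Next we bound $|(\phi,F\psi_{x_N})|$. By Cauchy--Schwarz in $X_{N-1}$ it is at most $\|F\phi\|_2\,\|\psi_{x_N}\|_2=\|F\phi\|_2\sqrt{\rho(x_N)}$. The estimate~\eqref{eqA.phi}, summed over $i$, already gives $\|F\phi\|_2\le C r_N^{-2}$ for large $r_N$. That yields~\eqref{eqA.ubound} for large $|x_N|$.

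The only remaining point is that $F$ involves $|x_i-x_N|^{-1}$, so the identity must be understood as an equality of locally integrable functions of $x_N$; this is the main technical obstacle. We would handle it by noting that $x_N\mapsto\|F\phi\|_2$ is locally bounded, since $\int|x_i-x_N|^{-2}\phi^2\,dx_i\le C\|\phi\|_{H^1}^2$ by Hardy's inequality. So $(\phi,F\psi_{x_N})$ is an $L^1_{\mathrm{loc}}$ function, and the distributional identity holds pointwise a.e. The bound for bounded $|x_N|$ is not needed: the estimate is used only for large $|x_N|$, as in the preceding lemmas. If a global statement is wanted, Hardy's inequality gives $\|F\phi\|_2\le C(1+r_N^{-1})$, which on compact sets is absorbed into a constant times $\sqrt{\rho}/r_N^2$ away from the origin.
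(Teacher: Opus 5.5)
Your proposal is correct and follows the paper's own proof: pairing $(H_N-E_N)\psi=0$ with $\phi$ in $X_{N-1}$ yields $\bigl(-\frac{\Delta_N}{2}-\frac{Z-N+1}{r_N}+\epsilon_N\bigr)u=-(\phi,F\psi_{x_N})$, and Cauchy--Schwarz together with the bound $\|F\phi\|_2\le Cr_N^{-2}$ from \eqref{eqA.phi} finishes it. Your added remarks on the distributional justification and on the Hardy-inequality bound for bounded $|x_N|$ are fine extras; the paper does not spell them out.
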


										\begin{proof}
											The integration by parts in the Schr\"odinger equation $H_N\psi=E_N\psi$ leads to
											\[
											\left(-\frac{\Delta_N}{2}-\frac{Z-N+1}{r_N}+\epsilon_N\right)u
											=\frac{N-1}{r_N}u-\sum_{i=1}^{N-1}\int_{\R^{3N-3}}\frac {\phi(X_{N-1})\psi(X_N)}{|x_i-x_N|}\,{dX_{N-1}}.
											\]
											Using~\eqref{eqA.phi} again, we have
											\[
											\left|\sum_{i=1}^{N-1}\int_{\R^{3N-3}}\left(\frac {1}{|x_i-x_N|}-\frac{1}{r_N}\right) \phi(X_{N-1})\psi(X_N)\,{dX_{N-1}}
											\right|
											\le \|F\phi\|_2\sqrt{\rho}\le \frac{C\sqrt{\rho}}{r_N^2},
											\]
											which is the desired result~\eqref{eqA.ubound}.
											\end{proof}

											\begin{remark}
												Lemma~\ref{lem.Schroineq} and Lemma~\ref{lemA.ubound} are independent of the asymptotics for $\sqrt{\rho}$ and $u$, and will be used in Appendix~\ref{App} below.
												\end{remark}
												Now we turn to the

									\begin{proof}[Proof of Theorem~\ref{th.L2}]
										We note that
										\[
										\left(1-\frac{u^2}{\rho}\right)\sqrt{\rho}
										=(\sqrt{\rho}-u)\left(1+\frac{u}{\sqrt{\rho}}\right)
										\ge \sqrt{\rho}-u=:w.
										\]
										Let us introduce a differential operator $\widehat{L}$ by
										\[
										\widehat{L}:=-\frac{\Delta_N}{2}-\frac{Z-N+1}{r_N}+\epsilon_N+\frac{E_{N-1}^{(1)}-E_{N-1}}{2}.
										\]
										Then Lemma~\ref{lem.Schroineq}, Lemma~\ref{lemA.ubound}, and $\sqrt{\rho}\le Cu$ imply that
										\begin{equation}
											\label{eqA.wbound}
											\begin{split}
											\widehat{L}w
											\le
											C\frac{u}{r_N^{2}}+
											\left|
											\left(-\frac{\Delta_N}{2}-\frac{Z-N+1}{r_N}+\epsilon_N\right)u
											\right|
											\le C_0\frac{u}{r_N^{2}}.
											\end{split}
											\end{equation}
											Moreover, $u$ is a solution of the one-particle Schr\"odinger equation $-\Delta u/2+q(x)u=0$ with $q$ defined by
											\[
											q(x):=\epsilon_N -\frac{Z-N+1}{|x|}+\frac{E_{N-1}^{(1)}-E_{N-1}}{2}-\frac{\widehat{L} u(x)}{u(x)}.
											\]
											By Lemma~\ref{lemA.ubound} and  $\sqrt{\rho}\le Cu$, this $q$ is bounded for large $|x_N|$.
											From the Harnack inequality~\cite{AS,SimonS} and gradient estimate~\cite[Thm.~C.2.5]{SimonS}, we have
											\[
											\sup_{|y-x|\le 1}|\nabla u(y)|
											\le C\sup_{|y-x|\le 2} u(y)\le Cu(x)
											\]
											for large $|x|=r$.
											Then it follows that
											\begin{align*}
												-\Delta (r^{-2}u)=-r^{-2}\Delta u -2(\nabla r^{-2})\cdot \nabla u -u\Delta r^{-2}
												\ge
												-r^{-2}\Delta u -C\frac{u}{r^3}.
												\end{align*}
												Together with~\eqref{eqA.ubound}, we can take $r_0$ so that for all $r_N\ge r_0$
												\begin{equation}
													\label{eqA.rubound}
													\begin{split}
													\left(-\frac{\Delta_N}{2}-\frac{Z-N+1}{r_N}+\epsilon_N+\frac{E_{N-1}^{(1)}-E_{N-1}}{2}\right)\frac{u}{r_N^2}
													\ge
													\underbrace{\frac{E_{N-1}^{(1)}-E_{N-1}}{4}}_{=:c_0}\frac{u}{r_N^2}.
													\end{split}
													\end{equation}

													Now we also choose $r_0$ large enough so that for any $r_N\ge r_0$
													\[
													-\frac{Z-N+1}{r_N}+\epsilon_N+\frac{E_{N-1}^{(1)}-E_{N-1}}{2}>0.
													\]
													Taking $A_0 >0$ such that
													\[
												\max\left\{\frac{C_0}{c_0}, r_0^2\sup_{|x|=r_0}\frac{w(x)}{u(x)}\right\}\le A_0,
													\]
													together with~\eqref{eqA.wbound} and~\eqref{eqA.rubound}, we obtain
													\[
													\widehat{L}\left(w-A_0\frac{u}{r_N^2}\right)
													\le
													(C_0-A_0c_0)\frac{u}{r_N^2} \le 0.
													\]
													Clearly, $0\le w \le \sqrt{\rho}\to 0$ as $|x_N|\to\infty$.
													Hence, we deduce from the maximum principle that
													\[
													\sup_{|x_N|\ge r_0}\left(w-A_0\frac{u}{r_N^2}\right)\le
													\max \left\{\sup_{|x_N|= r_0}\left(w-A_0\frac{u}{r_N^2}\right), \limsup_{|x_N|\to \infty} \left(w-A_0\frac{u}{r_N^2}\right)\right\}
													\le
													0.
													\]
													Therefore, we arrive at
													\[
													0\le \sqrt{\rho}-u \le A_0\frac{u}{r_N^2}\le A_0\frac{\sqrt{\rho}}{r_N^2}
													\]
													for $|x_N|\ge r_0$, which proves
													\[
													\int_{\R^{3N-3}} \left| \frac{\psi(X_N)}{\sqrt{\rho(x_N)}}-\phi(X_{N-1})\right|^2\,dx_1\cdots dx_{N-1}
													=2\left(1-\frac{u}{\sqrt{\rho}}\right) \le Cr_N^{-2}.
													\]
													This is the desired result.
													\end{proof}

													\appendix
													\section{Asymptotic bounds for the ground state density}
													\label{App}
													In this appendix, we provide a proof of~\eqref{eq.HOM} and~\eqref{eq.threshold}.
													At the threshold $E_N=E_{N-1}$, we only consider the case $Z<N-1$, otherwise $E_N$ cannot be a ground state eigenvalue~\cite{Goto}.
													To prove these bounds, we rely on the following comparison theorem based on the maximum principle (see, e.g.~\cite[Thm.~2.1]{DHSV}).

													\begin{theorem}
														\label{thm.Acompa}
														Let $f\ge0$ and $g$ be continuous on $|x|\ge R$.
														Suppose that $f,g \to 0$ as $|x|\to\infty$ and
														\[
														(-\Delta +W_1)g\le 0\le (-\Delta+W_2)f
														\]
														in the distributional sense, and $f\ge g$ on $|x|=R$.
														If $W_1\ge W_2\ge0$, then we have $f\ge g$ on $|x|\ge R$.
														\end{theorem}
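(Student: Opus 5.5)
Set $h:=g-f$. The plan is to show $h\le 0$ on $\{|x|\ge R\}$ with a maximum-principle argument in the exterior domain, using decay at infinity as the boundary condition there.

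First I subtract the two inequalities. From $(-\Delta+W_1)g\le 0$ and $(-\Delta+W_2)f\ge 0$ we get
\[
-\Delta h + W_1 h \le -(W_1-W_2) f\le 0,
\]
because $W_1\ge W_2$ and $f\ge 0$. So $h$ is a distributional subsolution of $-\Delta+W_1$ with $W_1\ge 0$. It is continuous, satisfies $h\le 0$ on $|x|=R$, and $h\to 0$ at infinity.

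Next I argue by contradiction. Fix $\epsilon>0$ and let $\Omega_\epsilon:=\{|x|>R: h(x)>\epsilon\}$. Since $h\to0$ at infinity, $\Omega_\epsilon$ is bounded. Since $h\le 0$ on the sphere, $\Omega_\epsilon$ stays away from it, and $h=\epsilon$ on $\partial\Omega_\epsilon$. On $\Omega_\epsilon$ we have $W_1 h\ge 0$, so $-\Delta h\le -W_1h\le 0$, and $h$ is a continuous distributional subharmonic function there. The weak maximum principle for subharmonic functions (equivalently, the sub-mean-value property of continuous distributionally subharmonic functions) gives $h\le \sup_{\partial\Omega_\epsilon}h=\epsilon$ in $\Omega_\epsilon$. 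This contradicts the definition of $\Omega_\epsilon$ unless $\Omega_\epsilon=\emptyset$. Hence $h\le\epsilon$ for every $\epsilon>0$, so $h\le 0$, which means $f\ge g$.

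The main obstacle is regularity. The inequalities hold only in the distributional sense, and $W_1,W_2$ may be merely locally bounded or singular. So I must justify that the product $W_1h$ makes sense and that "$-\Delta h\le 0$ on $\Omega_\epsilon$" holds as distributions on an open set defined through the continuous $h$. I would handle this by assuming, as in the intended applications, that $W_i\in L^1_{\mathrm{loc}}$ on $|x|\ge R$ and $f,g$ continuous. In that case $W_1h$ is a locally integrable function, and restricting a distributional inequality to the open set $\Omega_\epsilon$ is legitimate. Then $\Delta h\ge W_1h\ge0$ there as a nonnegative measure. Mollifying $h$ gives smooth subharmonic functions converging locally uniformly, so the sub-mean-value property and hence the weak maximum principle pass to $h$.
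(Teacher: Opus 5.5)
Your proof is correct. The subtraction $-\Delta h+W_1h\le-(W_1-W_2)f\le0$, the restriction to the bounded open set $\{|x|>R,\ h>\epsilon\}$ (which stays away from $|x|=R$ and has $h=\epsilon$ on its boundary), and the maximum principle for continuous distributionally subharmonic functions (justified by mollification) are all sound; your argument in fact only needs $W_1\ge\max\{W_2,0\}$. The paper gives no proof of its own; it calls this a comparison theorem based on the maximum principle and cites \cite[Thm.~2.1]{DHSV}, so your argument is essentially the intended one.
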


													To apply this theorem, we note that $u$ and $\sqrt{\rho}$ are continuous and vanish at infinity by the Sobolev embedding theorem.

													First, we prove the following upper bound.

													\begin{lemma}[Upper bounds for the one-particle densities]
														For any $x_N\in\R^3$ we have
														\begin{align}
														\sqrt{\rho(x_N)}&\le
														C(1+r_N)^{(Z-N+1)/\sqrt{2\epsilon_N}-1}\exp(-\sqrt{2\epsilon_N}r_N),\quad (E_N<E_{N-1}) \label{eqB.AHOMup}\\
														\sqrt{\rho(x_N)}&\le C(1+r_N)^{-3/4}\exp(-\kappa_N\sqrt{r_N}),\quad  (E_N=E_{N-1}) \label{eqB.thresholdup},
														\end{align}
															where $\epsilon_N=E_{N-1}-E_N$ and $\kappa_N:=\sqrt{8\left(N-1-Z\right)}$.
														\end{lemma}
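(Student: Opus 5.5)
The plan is to compare $\sqrt{\rho}$ with an explicit radial supersolution via Theorem~\ref{thm.Acompa}. First I will derive a Schr\"odinger inequality for $\sqrt{\rho}$ that needs no prior asymptotics. Lemma~\ref{lem.Schroineq} is one such inequality (see the Remark following Lemma~\ref{lemA.ubound}). Discarding the nonnegative term $\frac{E_{N-1}^{(1)}-E_{N-1}}{2}(1-u^2/\rho)\sqrt{\rho}$ gives, for $r_N\ge R$ large,
\[
\left(-\Delta_N-\frac{2(Z-N+1)}{r_N}-\frac{2C}{r_N^2}+2\epsilon_N\right)\sqrt{\rho}\le 0 .
\]
So $g:=\sqrt{\rho}$ satisfies $(-\Delta+W_1)g\le0$ with $W_1:=2\epsilon_N-2(Z-N+1)/r-2C/r^2$.

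\emph{Case $E_N<E_{N-1}$.} I take the trial function $f(r)=r^{\beta}e^{-\gamma r}$ with $\gamma=\sqrt{2\epsilon_N}$ and $\beta=(Z-N+1)/\gamma-1$. A direct computation gives
\[
-\Delta f/f=-\gamma^2+\frac{2\gamma(\beta+1)}{r}-\frac{\beta(\beta+1)}{r^2}=-2\epsilon_N+\frac{2(Z-N+1)}{r}-\frac{\beta(\beta+1)}{r^2}.
\]
Hence $(-\Delta+W_2)f=0$ exactly, where $W_2:=2\epsilon_N-2(Z-N+1)/r+\beta(\beta+1)/r^2$. The $1/r^2$ terms have the wrong relative sign for $W_1\ge W_2$. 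I repair this by modifying the trial function to $f=r^\beta e^{-\gamma r}(1+ K/r)$, or more simply by moving the $r^{-2}$ discrepancy into $f$. With $f=r^{\beta}e^{-\gamma r}(1-K/r)$ one gets an extra term of order $-2\gamma K/r^2\cdot$(sign) in $-\Delta f/f$, and choosing the sign and size of $K$ makes the effective $W_2\le W_1$ for $r\ge R$ large. I also need $W_2\ge0$, which holds for large $r$ since $\epsilon_N>0$.

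Theorem~\ref{thm.Acompa} as stated requires $W_1\ge W_2\ge 0$ with the \emph{same} functions in both inequalities. So I will phrase it as follows: $(-\Delta+W_1)f\ge0$ holds because $(-\Delta+W_2)f\ge 0$ and $W_1\ge W_2$ with $f\ge0$. This means I simply need $-\Delta f/f+W_1\ge 0$ for $r\ge R$, and then apply the theorem with $W_1=W_2$. Boundary matching on $|x|=R$ is done by multiplying $f$ by a large constant, using the boundedness of $\rho$ (from $\|\psi\|_\infty<\infty$ and the $L^2$ normalization). Both functions tend to $0$ at infinity, since $\sqrt{\rho}$ is continuous and vanishing by Sobolev. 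This gives \eqref{eqB.AHOMup} for $r\ge R$. For $r\le R$ the bound is trivial from boundedness of $\rho$, and the factor $(1+r_N)$ handles small $r$.

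\emph{Case $E_N=E_{N-1}$, $Z<N-1$.} Now $W_1=2(N-1-Z)/r-2C/r^2$. I take $f=r^{-3/4}e^{-\kappa\sqrt r}$ with $\kappa=\kappa_N=\sqrt{8(N-1-Z)}$. The radial computation for $f=r^{a}e^{-\kappa\sqrt r}$ gives
\[
-\Delta f/f=-\frac{\kappa^2}{4r}+\frac{\kappa(2a+3/2)}{2 r^{3/2}}+O(r^{-2}).
\]
Since $\kappa^2/4=2(N-1-Z)$, the $1/r$ terms cancel against $W_1$, and the choice $a=-3/4$ kills the $r^{-3/2}$ term. The remaining discrepancy is $O(r^{-2})$, which again must be made nonnegative. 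I will try $f=r^{-3/4}e^{-\kappa\sqrt r}(1+K r^{-1/2})$: the correction contributes a term of order $K r^{-2}$ with a sign controllable by the sign of $K$, which dominates the constant $2C/r^2$ for $K$ large. The comparison then proceeds exactly as before.

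The main obstacle I expect is precisely this subleading balancing. After the exact cancellations, the inequality $-\Delta f/f+W_1\ge0$ is decided at order $r^{-2}$ in the first case and $r^{-2}$ in the threshold case, where the error $C/r^2$ from Lemma~\ref{lem.Schroineq} competes. I must verify that the correction factor $(1\pm K r^{-1})$, respectively $(1\pm Kr^{-1/2})$, produces a term of the right sign and strictly larger order or size than all neglected remainders. For instance, in case one the correction yields $\approx 2\gamma K r^{-2}$, which beats $2C r^{-2}+|\beta(\beta+1)|r^{-2}$ for $K$ large. Since the factor tends to $1$, it does not change the stated asymptotic form.
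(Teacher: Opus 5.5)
Your proposal is correct and follows essentially the paper's route: the subsolution inequality for $\sqrt{\rho}$ from Lemma~\ref{lem.Schroineq}, the comparison function $r^{-3/4}e^{-\kappa_N\sqrt{r}}$ with a first-order correction, and Theorem~\ref{thm.Acompa} applied with $W_1=W_2$. Your factor $1+Kr^{-1/2}$ is the linearization of the paper's $e^{a/\sqrt{r}}$, and as there $K$ must be taken sufficiently negative (cf.\ the condition $a<4(3/32-d)/\kappa_N$); the one difference is that you also prove \eqref{eqB.AHOMup} by the same argument with $r^\beta e^{-\gamma r}(1-K/r)$, $K>0$, whereas the paper simply cites \cite[Thm.~3.1]{AHOM}.
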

														\begin{remark}
															The upper bounds also hold for physical ground state densities.
															\end{remark}

														Ahlrichs et al. gave~\eqref{eqB.AHOMup}~\cite[Thm.~3.1]{AHOM}, thus we only prove~\eqref{eqB.thresholdup}.
														\begin{proof}
															The following is an immediate consequence of Lemma~\ref{lem.Schroineq}; we also refer~\cite[Lemma~3.1]{AHOM}.
															\begin{lemma}
																There is a positive constant $d$ depending on $N,Z$ such that for sufficiently large $|x_N|$
																\[
																\left(
																-\frac{\Delta_N}{2}+\epsilon_N +\frac{N-1-Z}{r_N}-\frac{d}{r_N^2}
																\right)\sqrt{\rho(x_N)}\le0
																\]
																in the distributional sense.
																\end{lemma}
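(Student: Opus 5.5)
The plan is to read the claim off Lemma~\ref{lem.Schroineq}, using only the sign of the gap term and the size of the error terms there.

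First I would multiply the inequality of Lemma~\ref{lem.Schroineq} through. The term
\[
\frac{E_{N-1}^{(1)}-E_{N-1}}{2}\left(1-\frac{u^2}{\rho}\right)\sqrt{\rho}
\]
is nonnegative, because $E_{N-1}^{(1)}>E_{N-1}$ and $0\le u\le\sqrt{\rho}$ by Cauchy--Schwarz with $\|\phi\|_2=1$. I would drop it, which leaves
\[
\left(-\frac{\Delta_N}{2}-\frac{Z-N+1}{r_N}+\epsilon_N\right)\sqrt{\rho}\le C\frac{\sqrt{\rho}}{r_N^2}
\]
in the distributional sense for large $|x_N|$. Moving the right-hand side to the left and rewriting $-(Z-N+1)/r_N=(N-1-Z)/r_N$ gives the claim with $d:=C$, where $C$ is the constant from Lemma~\ref{lem.Schroineq}. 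This constant depends only on $N$ and $Z$, through $\||x_i|\phi\|_2$ and the spectral gap of $H_{N-1}$.

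The one point to check is that Lemma~\ref{lem.Schroineq} was proved without assuming the density asymptotics \eqref{eq.HOM} and \eqref{eq.threshold}. Otherwise the argument would be circular, since the present lemma is used to prove \eqref{eqB.thresholdup}. The remark after Lemma~\ref{lemA.ubound} states exactly this independence. Looking at the proof of Lemma~\ref{lem.Schroineq} confirms it: it uses only the Schr\"odinger equation, the spectral bound $H_{N-1}-E_{N-1}\ge(E_{N-1}^{(1)}-E_{N-1})Q$, the Hoffmann-Ostenhof inequality, and the decay of $\phi$ through \eqref{eqA.phi}. None of these involves the asymptotics of $\rho$ or $u$.

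The other subtlety is the division by $\sqrt{\rho}$ in \eqref{eqA.2}. This is legitimate because $\rho>0$ everywhere, since $\psi$ is strictly positive, and the inequality is interpreted as in \cite[Lemma~2.2]{CHO}. Beyond this bookkeeping I do not expect any real obstacle: the statement is simply a weakening of Lemma~\ref{lem.Schroineq}, and it remains valid at threshold $\epsilon_N=0$.
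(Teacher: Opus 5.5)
Your proposal is correct and matches the paper's argument: the paper states that this lemma is an immediate consequence of Lemma~\ref{lem.Schroineq}, obtained by discarding the nonnegative gap term $\frac{E_{N-1}^{(1)}-E_{N-1}}{2}(1-u^2/\rho)\sqrt{\rho}$ and moving $C r_N^{-2}\sqrt{\rho}$ to the left-hand side with $d=C$. You were also right to check that Lemma~\ref{lem.Schroineq} does not use \eqref{eq.HOM} or \eqref{eq.threshold}, which rules out circularity.
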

															Since we consider the threshold case, $N-1-Z>0$.
															For $a\in\R$, define
															\begin{equation}
																\label{eqB.defv}
															v_a(r_N):=Cr^{-3/4}\exp\left(-\kappa_N\sqrt{r_N}+\frac{a}{\sqrt{r}}\right).
															\end{equation}
															By a direct calculation, one has
															\begin{equation}
																\label{eqB.v}
																\begin{split}
																\left(
																-\frac{\Delta_N}{2}+\frac{N-1-Z}{r_N}-\frac{d}{r_N^2}
																\right)v_a(r_N)
																=
																\left(\frac{3/32 -d-\kappa_Na/4}{r^2} -\frac{a}{4r^{5/2}}-\frac{a^2}{8r^3} \right)v_a>0
																\end{split}
																\end{equation}
																for $a<4(3/32-d)/\kappa_N$ and large $r_N$.
																Choosing $C_R$ large enough so that for $|x_N|=R$
																\[
																C_Rv_a(r_N)\ge \max_{|x_N|=R}\sqrt{\rho(x_N)},
																\]
																which shows $C_Rv_a(r_N)\ge \sqrt{\rho(x_N)}$ from Theorem~\ref{thm.Acompa}.
																Since $\rho$ is bounded, we have~\eqref{eqB.thresholdup}.
															\end{proof}

															For the lower bound, we first show a preliminary lower bound.

															\begin{lemma}
																\label{lemB.ulowth}
																For sufficiently large $|x_N|$, we have
																\[
																u(x_N)\ge C_\lambda\exp(-\lambda|x_N|)
																\]
																for any $\lambda> \sqrt{2\epsilon_N}$.
																\end{lemma}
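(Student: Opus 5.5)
The plan is to derive a differential inequality that makes $u$ a supersolution of a one-particle operator, up to an absolutely small error, and then compare $u$ with $h_\lambda(x):=c\,e^{-\lambda|x|}$ using Theorem~\ref{thm.Acompa}. Integrating the Schr\"odinger equation against $\phi$, exactly as in the proof of Lemma~\ref{lemA.ubound}, gives
\[
\Bigl(-\tfrac{\Delta_N}{2}+\epsilon_N-\tfrac{Z}{r_N}\Bigr)u=-\sum_{i=1}^{N-1}\int_{\R^{3N-3}}\frac{\phi(X_{N-1})\psi(X_N)}{|x_i-x_N|}\,dX_{N-1}.
\]
This identity alone bounds $u$ from above, so I will bound its right-hand side from below.

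\textbf{Splitting the interaction term.} Fix $0<\delta<1$ and split each integral into the regions $\{|x_i|\le\delta r_N\}$ and $\{|x_i|>\delta r_N\}$.
\begin{itemize}
\item On the first region, $|x_i-x_N|^{-1}\le((1-\delta)r_N)^{-1}$. Since $\phi,\psi>0$, this part is at most $u/((1-\delta)r_N)$.
\item On the second region, $\phi$ is exponentially small by \eqref{eq.Agmon}: $\phi\le C_\sigma e^{-\sigma\delta r_N}$ for any $0<\sigma<\sqrt{2\epsilon_{N-1}}$, and this decay rate is positive because $E_{N-1}<E_{N-2}$.
\item The factor $\psi(X_N)$ is bounded pointwise by the subsolution estimate \eqref{eq.subsol}. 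The bound $\bigl(\int_{|y-x_N|<1}\rho\bigr)^{1/2}$ is controlled by the upper bounds \eqref{eqB.AHOMup} and \eqref{eqB.thresholdup} evaluated at radius $r_N-1$. This uses only the upper bounds already proved, not the two-sided asymptotics.
\item The remaining $\int\phi^{1/2}|x_i-x_N|^{-1}$-type integral is finite by local integrability of the Coulomb singularity.
\end{itemize}
The upshot is
\[
\Bigl(-\tfrac{\Delta_N}{2}+\epsilon_N+W\Bigr)u\ge -\mathcal R(x_N),\qquad W:=\frac{N-1}{(1-\delta)r_N}-\frac{Z}{r_N},\qquad \mathcal R\le C e^{-(\sqrt{2\epsilon_N}+\sigma\delta/2)r_N}.
\]
In the threshold case the exponent $\sqrt{2\epsilon_N}\,r_N$ in $\mathcal R$ is replaced by the stretched-exponential bound; $\mathcal R$ still decays at least like $e^{-\sigma\delta r_N/2}$ up to polynomial factors.

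\textbf{Comparison.} For $h_\lambda$ a direct computation gives
\[
\Bigl(-\tfrac{\Delta}{2}+\epsilon_N+W\Bigr)h_\lambda=\Bigl(\epsilon_N-\tfrac{\lambda^2}{2}+\tfrac{\lambda}{r}+W\Bigr)h_\lambda\le-\tfrac12\Bigl(\tfrac{\lambda^2}{2}-\epsilon_N\Bigr)h_\lambda
\]
for large $r$, since $\lambda>\sqrt{2\epsilon_N}$. It suffices to treat $\lambda$ close to $\sqrt{2\epsilon_N}$, because larger $\lambda$ gives a weaker statement. Choosing $\lambda<\sqrt{2\epsilon_N}+\sigma\delta/2$, we get $\mathcal R\le\tfrac12(\tfrac{\lambda^2}{2}-\epsilon_N)h_\lambda$ for large $r$, and hence $f:=u-h_\lambda$ satisfies $(-\Delta/2+\epsilon_N+W)f\ge0$. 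Positivity of the potential, needed for Theorem~\ref{thm.Acompa}, holds for large $r$: if $\epsilon_N>0$ the constant dominates, and at threshold $Z<N-1$ gives $W>0$. Since $u>0$ is continuous, I can shrink $c$ so that $u\ge h_\lambda$ on a large sphere $|x_N|=R$. Both functions vanish at infinity, so Theorem~\ref{thm.Acompa} (applied to $f$ against $0$, or to $u$ against $h_\lambda$ with equal potentials) yields $u\ge c\,e^{-\lambda r_N}$ for $r_N\ge R$.

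\textbf{Main obstacle.} The hard step is controlling the far-region term $\mathcal R$ absolutely rather than relative to $u$. A relative bound would be circular, since $\sqrt\rho\le Cu$ is exactly what is not yet known. This is why the argument leans on the exponential decay of $\phi$ together with the already-established upper bounds for $\rho$. The margin $\sigma\delta/2$ gained from $\phi$ is what permits every $\lambda>\sqrt{2\epsilon_N}$ but not $\lambda=\sqrt{2\epsilon_N}$ itself.
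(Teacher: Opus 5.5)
The derivation of $\bigl(-\frac{\Delta_N}{2}+\epsilon_N+W\bigr)u\ge-\mathcal R$ is correct, with $\mathcal R\le Ce^{-\mu r_N}$ and $\mu:=\sqrt{2\epsilon_N}+\sigma\delta/2$ (up to polynomial factors). The comparison step, however, does not close, because of the order in which you fix the constants.

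To get $\bigl(-\frac{\Delta}{2}+\epsilon_N+W\bigr)(u-h_\lambda)\ge0$ on $\{|x_N|\ge R\}$, you need $\mathcal R\le\frac12\bigl(\frac{\lambda^2}{2}-\epsilon_N\bigr)c\,e^{-\lambda r}$ for all $r\ge R$. Knowing only the upper bound on $\mathcal R$, this forces $c\ge C'e^{-(\mu-\lambda)R}$. You also need $c\,e^{-\lambda R}\le\min_{|x_N|=R}u$. Together these require $\min_{|x_N|=R}u\ge C'e^{-\mu R}$ on some sufficiently large sphere, which is an a priori exponential lower bound on $u$ of the same type as the lemma itself.

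Shrinking $c$ does not help. It only moves further out the radius beyond which $\mathcal R\le\frac12(\frac{\lambda^2}{2}-\epsilon_N)h_\lambda$ holds. Positivity and continuity of $u$ give no control on how small $u$ is on large spheres; Harnack chains give only $e^{-CR}$ with an uncontrolled $C$.

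In fact the information you use cannot imply the lemma at all. A positive function behaving like $e^{-2\mu r}$ also satisfies $\bigl(-\frac{\Delta}{2}+\epsilon_N+W\bigr)v\ge-Ce^{-\mu r}$ for large $r$. So an absolute remainder is just as circular as a relative one, only less visibly.

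The missing idea is to eliminate the far region rather than estimate it. The paper replaces $\phi$ by the normalized Dirichlet ground state $\phi_R$ of $H_{N-1}$ on the ball $\Omega_R=\{|X_{N-1}|<R\}$. It takes $R$ so large that $E_{N-1}^{(R)}-E_{N-1}<\lambda^2/2-\epsilon_N$, and works with $u_R:=\int_{\Omega_R}\phi_R\psi\,dX_{N-1}$.

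Green's formula then produces the boundary term $\frac12\int_{\partial\Omega_R}\psi\,\partial_n\phi_R\,dS\le0$, since $\phi_R=0$ and $\partial_n\phi_R\le0$ on $\partial\Omega_R$. On $\Omega_R$ one has $|x_i-x_N|\ge r_N-R$, so the interaction term is at most $\frac{N-1}{r_N-R}u_R$ with no remainder. Hence $u_R$ satisfies the homogeneous inequality $\bigl(-\frac{\Delta_N}{2}-\frac{Z}{r_N}+E_{N-1}^{(R)}-E_N+\frac{N-1}{r_N-R}\bigr)u_R\ge0$.

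The comparison with $c\,e^{-\lambda r_N}$ via Theorem~\ref{thm.Acompa} then works, with $c$ chosen freely on a fixed sphere. Finally, $\phi_R\le C_R\phi$ on $\Omega_R$ gives $u\ge C_R^{-1}u_R$. The loss in the rate comes from the Dirichlet energy shift $E_{N-1}^{(R)}-E_{N-1}$, not from a decay margin of $\phi$.
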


																\begin{proof}
																	The proof is essentially the same as that in~\cite{HO79}.
																	Let $\phi_R$ be the normalized Dirichlet ground state of $H_{N-1}$ confined in  $\Omega_R=\{X_{N-1}\colon |X_{N-1}|< R\}$: $H_{N-1}\phi_R=E_{N-1}^{(R)}\phi_R$.
																	We may assume $E_{N-1}^{(R)}-E_{N-1}<\lambda^2/2-\epsilon_N$ and define
																	\[
																	u_R(x_N):=\int_{\Omega_R}\phi_R(X_{N-1})\psi(X_N)\,dX_{N-1}.
																	\]
																	Applying Green's theorem leads to
																	\begin{align*}
																	0&=
																	\left(-\frac{\Delta_N}{2}-\frac{Z}{r_N}+E_{N-1}^{(R)} -E_N\right)u_R(x_N)
																	+\sum_{i=1}^{N-1}\int_{\Omega_R}\frac{\phi_R(X_{N-1}) \psi(X_N)}{|x_i-x_N|}\,dX_{N-1}\\
																	&\quad-\frac{1}{2}\int_{\partial\Omega_R} \left(\phi_R\frac{\partial \psi}{\partial n}-\psi \frac{\partial \phi_R}{\partial n}\right)\,dS_R.
																	\end{align*}
																	It is clear that
																	\[
																	\sum_{i=1}^{N-1}\int_{\Omega_R}\frac{\psi(X_N)\phi_R(X_{N-1})}{|x_i-x_N|}\,dX_{N-1}
																	\le
																	\frac{N-1}{r_N-R}u_R(x_N)
																	\]
																	for $|x_N|>R$.
																	Since $\phi_R=0$ on $\partial\Omega_R$ and $ {\partial \phi_R}/{\partial n}\le 0$, we have
																	\[
																	0\le
																	\left(-\frac{\Delta_N}{2}-\frac{Z}{r_N}+E_{N-1}^{(R)}-E_N+\frac{N-1}{r_N-R}\right)u_R(x_N).
																	\]
																	Since $E_{N-1}^{(R)}-E_N<\lambda^2/2$, we have $u_R\ge c_\lambda e^{-\lambda r_N}$ for $|x_N|$ large enough by Theorem~\ref{thm.Acompa}.
																	Combining with $\phi_R \le C_R\phi$, we have the conclusion.
																	\end{proof}

															\begin{lemma}[Lower bounds for one-particle densities]
																For any $x_N\in\R^3$, we have $\sqrt{\rho}\le Cu$ and
																\begin{align}
																	\sqrt{\rho(x_N)}&\ge u(x_N)\ge
																C(1+r_N)^{(Z-N+1)/\sqrt{2\epsilon_N}-1}\exp(-\sqrt{2\epsilon_N}r_N),\quad (E_N<E_{N-1}) \label{eqB.AHOMlow}\\
																	\sqrt{\rho(x_N)}&\ge  u(x_N)\ge C(1+r_N)^{-3/4}\exp(-\kappa_N\sqrt{r_N}),\quad (E_N=E_{N-1}) \label{eqB.thresholdlow},
																\end{align}
																\end{lemma}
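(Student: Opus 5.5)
Here is the plan. The inequality $u\le\sqrt{\rho}$ is just Cauchy--Schwarz, since $\|\phi\|_2=1$. So there are two things to prove: the reverse bound $\sqrt{\rho}\le Cu$, and the explicit lower bound for $u$.

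\textbf{Step 1: $\sqrt{\rho}\le Cu$.} I will reuse the maximum-principle argument from the proof of Theorem~\ref{th.L2}. That argument used only Lemma~\ref{lem.Schroineq}, Lemma~\ref{lemA.ubound} and the comparison principle, apart from the input $\sqrt\rho\le Cu$ itself, so it must be rewritten without that input. Set $w=\sqrt\rho-u\ge0$ and $\delta:=(E^{(1)}_{N-1}-E_{N-1})/2$. Using $(1-u^2/\rho)\sqrt\rho\ge w$, Lemma~\ref{lem.Schroineq} and Lemma~\ref{lemA.ubound} give
\[
\Bigl(-\tfrac{\Delta_N}{2}-\tfrac{Z-N+1}{r_N}+\epsilon_N+\delta\Bigr)w\le C\frac{\sqrt\rho}{r_N^2}=C\frac{w+u}{r_N^2}.
\]
For large $r_N$ the $Cw/r_N^2$ term is absorbed into $\delta w/2$. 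This yields $\widetilde L w\le Cu/r_N^2$, where $\widetilde L$ has constant potential at least $\delta/4>0$ near infinity.

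Next I compare $w$ with $Au$. By Lemma~\ref{lemA.ubound}, $\widetilde L u\ge (\delta/2)u-C\sqrt\rho/r_N^2$. Hence
\[
\widetilde L(w-Au)\le \frac{C(u+w)}{r_N^2}-\frac{A\delta}{2}u .
\]
The term $w/r_N^2$ is again absorbed, this time into the left-hand side. Concretely, I would work with the operator $\widetilde L-C/r_N^2$, which still has a positive potential for large $r_N$. The outcome is $(\widetilde L-C r_N^{-2})(w-Au)\le0$ once $A$ is large. Choosing $A$ also so large that $w\le Au$ on a sphere $|x_N|=r_0$, which is possible since $u>0$ is continuous, the maximum principle (Theorem~\ref{thm.Acompa}, or its direct version as used above) gives $w\le Au$ for $|x_N|\ge r_0$. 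On the compact ball, $\sqrt\rho\le Cu$ follows from continuity and strict positivity. Strict positivity holds because $\phi,\psi>0$, so $u>0$.

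\textbf{Step 2: lower bound on $u$.} Lemma~\ref{lemA.ubound} together with Step 1 gives
\[
\Bigl(-\tfrac{\Delta_N}{2}-\tfrac{Z-N+1}{r_N}+\epsilon_N+\tfrac{d}{r_N^2}\Bigr)u\ge0
\]
for a suitable $d>0$ and large $r_N$. I then construct a subsolution $g$ with $(-\Delta/2+W)g\le0$ for $W=\epsilon_N-(Z-N+1)/r+d/r^2$ (or a larger potential), and apply Theorem~\ref{thm.Acompa} with $f=u$.

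In the threshold case $\epsilon_N=0$, I take $g=c\,v_a$ from \eqref{eqB.defv}. By the computation \eqref{eqB.v} with $d$ replaced by $-d$, the coefficient is $(3/32+d-\kappa_N a/4)/r^2$, which is negative for $a$ large positive. So $g$ is a subsolution.

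In the case $E_N<E_{N-1}$, I take $g=c\,r^{\gamma-1}e^{-kr}(1-b/r)$ with $k=\sqrt{2\epsilon_N}$ and $\gamma=(Z-N+1)/k$. A direct computation shows that the $r^{-1}$ terms cancel by the choice of $\gamma$. With $b$ chosen large, the $r^{-2}$ correction has the right sign, since the factor $(1-b/r)$ contributes a term of order $kb/r^2$ with a favourable sign. The decay condition $g\to0$ is clear in both cases.

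The constant $c$ is chosen so that $g\le u$ on $|x|=R$. Finally, the bound is extended to all $x_N$ by positivity and continuity of $u$. Lemma~\ref{lemB.ulowth} is not strictly needed for this.

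\textbf{Main obstacle.} The delicate part is Step 1. The absorption of the $\sqrt\rho/r_N^2$ errors must be done without circularly assuming $\sqrt\rho\le Cu$, and the comparison requires $w\le Au$ on the inner sphere. The latter uses only $u>0$, but the comparison theorem needs a nonnegative potential for the modified operator. This holds for large $r_N$ because $\delta>0$ dominates both $-(Z-N+1)/r_N$ and $C/r_N^2$, even at threshold.
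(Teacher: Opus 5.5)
\textbf{Gap in Step 1.} When you subtract $A$ times the inequality for $u$, the error $C\sqrt{\rho}/r_N^2$ from Lemma~\ref{lemA.ubound} is also multiplied by $A$. The correct bound is $\widetilde L(w-Au)\le C(1+A)u/r_N^2+CA\,w/r_N^2-(A\delta/2)u$, so the coefficient of $w$ is not $A$-independent as you wrote. Absorbing $CAw/r_N^2$ into the operator and writing $w=(w-Au)+Au$ leaves a term $CA^2u/r_N^2$ on the right. Hence both $(\widetilde L-CAr_N^{-2})(w-Au)\le 0$ and the positivity of the potential hold only for $r_N\gtrsim\sqrt{A}$.

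So the inner radius $r_0$ depends on $A$, and choosing $A\ge\sup_{|x_N|=r_0}w/u$ becomes circular. It would need an a priori bound such as $\sqrt{\rho}/u=o(r_N^2)$ on spheres, which is essentially what you are proving. The underlying obstruction is that $u$ is a supersolution only up to an error controlled by $\sqrt{\rho}$. Therefore $u$ cannot serve as the barrier before $\sqrt{\rho}\lesssim u$ is known.

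\textbf{How the paper avoids it.} The paper fixes the multiplier at $2$, so the $A^2$ term only enlarges a fixed constant $K$. Then $\bigl(-\tfrac{\Delta_N}{2}-\tfrac{Z-N+1}{r_N}+\epsilon_N+\delta_N-\tfrac{K}{r_N^2}\bigr)(\sqrt{\rho}-2u)\le 0$ beyond a fixed radius, with $\delta_N=\delta$. It then compares with the explicit barrier $Ce^{-\mu r_N}$, where $\sqrt{2\epsilon_N}<\mu<\sqrt{2(\epsilon_N+\delta_N)}$; this barrier can be made to dominate on the inner sphere. This gives $\sqrt{\rho}\le 2u+Ce^{-\mu r_N}$. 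The remainder is converted into $Cu$ using the independent a priori lower bound $u\ge c_\lambda e^{-\lambda r_N}$ with $\lambda<\mu$, from Lemma~\ref{lemB.ulowth}, which is proved with Dirichlet ground states $\phi_R$. Lemma~\ref{lemB.ulowth} is exactly the ingredient that breaks the circularity, so you cannot dispense with it.

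\textbf{Step 2.} Otherwise Step 2 follows the paper, and your threshold case is correct. For $E_N<E_{N-1}$, however, your correction factor has the wrong sign. Since $1/r$ is harmonic, with $g_0=r^{\gamma-1}e^{-kr}$ one gets $(-\tfrac12\Delta+W)\bigl(g_0(1-b/r)\bigr)=\bigl(d-\tfrac{(\gamma-1)\gamma}{2}+kb\bigr)r^{-2}g_0+O(r^{-3})g_0$. Thus $b>0$ large makes the $r^{-2}$ coefficient positive rather than negative. Use $(1+b/r)$, or $e^{a/r}$ with $a>0$ large as the paper does.
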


																\begin{proof}
																	Let $\delta_N:=(E_{N-1}^{(1)}-E_{N-1})/2$.
																	From Lemma~\ref{lem.Schroineq}, we have
																	\begin{align*}
																		\left(-\frac{\Delta_N}{2}-\frac{Z-N+1}{r_N}+\epsilon_N+\delta_N-\frac{K}{r_N^2}\right)
																		\sqrt{\rho}
																		\le\delta_N u
																	\end{align*}
																	for sufficiently larger $K$.
																	Using~\eqref{eqA.ubound}, it follows that
																	\begin{align*}
																		\left(-\frac{\Delta_N}{2}-\frac{Z-N+1}{r_N}+\epsilon_N+\delta_N-\frac{K}{r_N^2}\right)(\sqrt{\rho}-2u)
																		\le \left( \frac{2K}{r_N^2} - \delta_N\right)u
																		\le 0
																		\end{align*}
																	for sufficiently large $|x_N|$.
																	Choosing $\lambda,\mu>0$ such that $\sqrt{2\epsilon_N}<\lambda< \mu<\sqrt{2(\epsilon_N+\delta_N)}$, the maximum principle yields $\sqrt{\rho}\le 2u+Ce^{-\mu r_N}$, since
																	\begin{equation}
																	\label{eqB.emubound}
																	\left(-\frac{\Delta_N}{2}-\frac{Z-N+1}{r_N}+\epsilon_N+\delta_N-\frac{K}{r_N^2}\right)
																	e^{-\mu r_N} \ge 0.
																	\end{equation}
																	By Lemma~\ref{lemB.ulowth}, we have $\sqrt{\rho}\le Cu$.
																	Hence~\eqref{eqA.ubound} becomes
																	\begin{equation}
																	\label{eqB.ubound}
																	\left(-\frac{\Delta_N}{2}-\frac{Z-N+1}{r_N} +\epsilon_N+\frac{D}{r_N^{2}} \right)u
																	\ge 0.
																	\end{equation}
																	Now we focus on the threshold case $\epsilon_N=0$.
																	Using the function $v_a$ in~\eqref{eqB.defv}, we have
																	\[
																	\left(-\frac{\Delta_N}{2}-\frac{Z-N+1}{r_N}+\frac{D}{r_N^{2}} \right)v_a
																	=
																	\frac{D+3/32-\kappa_N a/4}{r_N^2}v_a-\frac{a}{4r^{5/2}}v_a-\frac{a^2}{8r^3}v_a<0
																	\]
																	for large $a>0$ and $r_N$.
																	Theorem~\ref{thm.Acompa}  leads to $u\ge v_a$, which proves~\eqref{eqB.thresholdlow} since $u>0$ and $\sqrt{\rho}$ are continuous.

																	Next, we show~\eqref{eqB.AHOMlow}.
																	Let
																	\[
																	\beta=\frac{Z-N+1}{\sqrt{2\epsilon_N}}-1.
																	\]
																	Taking the function $g_a(r):=r^\beta \exp(-\sqrt{2\epsilon_N}r + a/r)$, we have
																	\[
																	\left(-\frac{\Delta_N}{2}-\frac{Z-N+1}{r_N}+\epsilon_N +\frac{D}{r_N^{2}} \right)g_a\le 0,
																	\]
																	for large $a>0$.
																	Then we conclude $u\ge cg_a \ge cr^\beta e^{-\sqrt{2\epsilon_N} r_N}$ for large $|x_N|$.
																	Since $u>0$ is continuous, we have the desired bound~\eqref{eqB.AHOMlow}.
																	\end{proof}

							\section*{Acknowledgments}
							The research is partially supported by JSPS KAKENHI Grant Number JP26K16996.


\begin{thebibliography}{99}

	\bibitem{Agmon}
	S.~Agmon,
	\emph{Lectures on exponential decay of solutions of second-order
		elliptic equations: Bounds on eigenfunctions of $N$-body
		Schr\"odinger operators},
	Mathematical Notes, vol.~29,
	Princeton University Press, Princeton, NJ, 1982.

	\bibitem{AHO}
	R.~Ahlrichs, M.~Hoffmann-Ostenhof, and T.~Hoffmann-Ostenhof,
	Bounds for the long-range behavior of electronic wavefunctions,
	\emph{J. Chem. Phys.} \textbf{68} (1978), 1402--1410.

	\bibitem{AHOM}
	R.~Ahlrichs, M.~Hoffmann-Ostenhof, T.~Hoffmann-Ostenhof, and
	J.~D.~Morgan~III,
	Bounds on the decay of electron densities with screening,
	\emph{Phys. Rev. A} \textbf{23} (1981), 2106--2117.

	\bibitem{AS}
	M.~Aizenman and B.~Simon,
	Brownian motion and Harnack inequality for Schr\"odinger operators,
	\emph{Comm. Pure Appl. Math.} \textbf{35} (1982), 209--273.

	\bibitem{BFLS}
	J.~Bellazzini, R.~L.~Frank, E.~H.~Lieb, and R.~Seiringer,
	Existence of ground states for negative ions at the binding threshold,
	\emph{Rev. Math. Phys.} \textbf{26} (2014), 1350021.

	\bibitem{BeSa}
	H.~A.~Bethe and E.~E.~Salpeter,
	\emph{Quantum mechanics of one- and two-electron atoms},
	Springer-Verlag, Berlin, Heidelberg, 1957.

	\bibitem{Briet}
	P.~H.~Briet,
	The relation between the $(N)$- and $(N-1)$-electron atomic
	ground states,
	\emph{J. Math. Phys.} \textbf{26} (1985), 2560--2564.

	\bibitem{Carmona}
	R.~Carmona,
	Regularity properties of Schr\"odinger and Dirichlet semigroups,
	\emph{J. Funct. Anal.} \textbf{33} (1979), 259--296.

	\bibitem{CS}
	R.~Carmona and B.~Simon,
	Pointwise bounds on eigenfunctions and wave packets in $N$-body
	quantum systems.~V. Lower bounds and path integrals,
	\emph{Comm. Math. Phys.} \textbf{80} (1981), 59--98.

	\bibitem{CHO}
	J.~M.~Combes, M.~Hoffmann-Ostenhof, and T.~Hoffmann-Ostenhof,
	Asymptotics of atomic ground states: The relation between the
	ground state of helium and the ground state of $\mathrm{He}^{+}$,
	\emph{J. Math. Phys.} \textbf{22} (1981), 1299--1305.

	\bibitem{CT}
	J.~M.~Combes and L.~Thomas,
	Asymptotic behaviour of eigenfunctions for multiparticle
	Schr\"odinger operators,
	\emph{Comm. Math. Phys.} \textbf{34} (1973), 251--270.

	\bibitem{DHSV}
	P.~Deift, W.~Hunziker, B.~Simon, and E.~Vock,
	Pointwise bounds on eigenfunctions and wave packets in $N$-body
	quantum systems.~IV,
	\emph{Comm. Math. Phys.} \textbf{64} (1978), 1--34.

	\bibitem{Goto}
	Y.~Goto,
	Absence of a ground state for bosonic Coulomb systems with
	critical charge,
	\emph{Rep. Math. Phys.} \textbf{81} (2018), 177--184.

	\bibitem{Gridnev}
	D.~K.~Gridnev,
	Bound states at threshold resulting from Coulomb repulsion,
	\emph{J. Math. Phys.} \textbf{53} (2012), 102108.

	\bibitem{KalfHinz}
	A.~M.~Hinz and H.~Kalf,
	Subsolution estimates and Harnack's inequality for
	Schr\"odinger operators,
	\emph{J. Reine Angew. Math.} \textbf{404} (1990), 118--134.

	\bibitem{HO79} Hoffmann-Ostenhof, T. Lower and upper bounds to the decay of the ground state one-electron density of helium-like systems.
	\emph{J. Phys. A: Math. Gen.} \textbf{12} (1979), 1181--1187.

	\bibitem{HH}
	M.~Hoffmann-Ostenhof and T.~Hoffmann-Ostenhof,
	``Schr\"odinger inequalities'' and asymptotic behavior of the
	electron density of atoms and molecules,
	\emph{Phys. Rev. A} \textbf{16} (1977), 1782--1785.

	\bibitem{HOS}
	M.~Hoffmann-Ostenhof, T.~Hoffmann-Ostenhof, and B.~Simon,
	A multiparticle Coulomb system with bound state at threshold,
	\emph{J. Phys. A: Math. Gen.} \textbf{16} (1983), 1125--1131.

	\bibitem{Kato}
	T.~Kato,
	On the eigenfunctions of many-particle systems in quantum mechanics,
	\emph{Comm. Pure Appl. Math.} \textbf{10} (1957), 151--177.

	\bibitem{KaDa}
	J.~Katriel and E.~R.~Davidson,
	Asymptotic behavior of atomic and molecular wave functions,
	\emph{Proc. Natl. Acad. Sci. USA} \textbf{77} (1980), 4403--4406.

	\bibitem{LiSe}
	E.~H.~Lieb and R.~Seiringer,
	\emph{The stability of matter in quantum mechanics},
	Cambridge University Press, Cambridge, 2010.

	\bibitem{LS}
	E.~H.~Lieb and B.~Simon,
	Pointwise bounds on eigenfunctions and wave packets in $N$-body
	quantum systems.~VI. Asymptotics in the two-cluster region,
	\emph{Adv. in Appl. Math.} \textbf{1} (1980), 324--343.

	\bibitem{Morgan}
	J.~D.~Morgan~III,
	The exponential decay of sub-continuum wavefunctions of
	two-electron atoms,
	\emph{J. Phys. A: Math. Gen.} \textbf{10} (1977), L91--L93.

	\bibitem{SimonS}
	B.~Simon,
	Schr\"odinger semigroups,
	\emph{Bull. Amer. Math. Soc. (N.S.)} \textbf{7} (1982),
	447--526.

	\bibitem{Simon}
	B.~Simon,
	\emph{Functional integration and quantum physics},
	2nd ed., AMS Chelsea Publishing, Providence, RI, 2005.

	\bibitem{Teschl}
	G.~Teschl,
	\emph{Mathematical methods in quantum mechanics: With applications
		to Schr\"odinger operators},
	2nd ed., Graduate Studies in Mathematics, vol.~157,
	American Mathematical Society, Providence, RI, 2014.

	\bibitem{Zhislin}
	G.~M.~Zhislin,
	A study of the spectrum of the Schr\"odinger operator for a system
	of several particles,
	\emph{Tr. Mosk. Mat. Obs.} \textbf{9} (1960), 81--120
	(in Russian).

\end{thebibliography}
	\end{document}